\documentclass[10pt,twocolumn]{article}
\usepackage[a4paper,margin=1.65cm,columnsep=0.55cm]{geometry}
\usepackage[T1]{fontenc}
\usepackage{lmodern}
\usepackage{microtype}
\usepackage{amsmath,amssymb,amsthm,mathtools,bm}
\usepackage{booktabs,array,tabularx,multirow}
\usepackage{graphicx}
\usepackage{tikz}
\usetikzlibrary{arrows.meta,positioning,calc,decorations.pathreplacing,patterns}
\usepackage{pgfplots}
\pgfplotsset{compat=1.18}
\usepackage{enumitem}
\usepackage[hidelinks]{hyperref}
\usepackage{xcolor}
\usepackage[numbers,sort&compress]{natbib}
\usepackage{url}
\usepackage{siunitx}
\usepackage{algorithm}
\usepackage{algpseudocode}
\usepackage{caption}
\setlist{nosep,leftmargin=*}
\allowdisplaybreaks

\newtheorem{theorem}{Theorem}
\newtheorem{lemma}{Lemma}
\newtheorem{proposition}{Proposition}

\theoremstyle{definition}

\theoremstyle{remark}

\newcommand{\CHSH}{\mathcal C}

\newcommand{\R}{\mathbb R}
\newcommand{\Z}{\mathbb Z}
\newcommand{\eps}{\varepsilon}

\newcommand{\Prob}{\mathbb P}

\title{\textbf{Searching for Bell--CHSH-Violating Bose Operators}\\[0.35em]\large\textit{Modular Bose--Clifford Fibre Matchings and Reflection-Inclusive Finite-Band Optimization}}
\author{Tony Newton\thanks{Newton Astro Labs is the trading name under which Tony Newton conducts independent computational research in the United Kingdom; it is not a limited company.}\\
Newton Astro Labs, London, UK\\
\texttt{tony.newton79@gmail.com}
\and
Silvio Paolo Sorella\thanks{\texttt{silvio.sorella@gmail.com}}\\
UERJ -- Universidade do Estado do Rio de Janeiro\\
Instituto de F\'isica -- Departamento de F\'isica Te\'orica\\
Rua S\~ao Francisco Xavier 524, 20550-013, Maracan\~a, Rio de Janeiro, Brazil}
\date{}

\begin{document}
\maketitle

\begin{abstract}
Relativistic Bose fields admit maximal Bell correlations abstractly, yet explicit bounded local observables remain difficult to construct in a form that is simultaneously algebraically local, exactly controlled, and useful at finite modular bandwidth. This paper studies an exact spectral-fibre route. For a symplectically normalized local canonical pair, the Clifford condition reduces to measurable involutions $T:\R\to\R$ satisfying $T^2=\mathrm{id}$ and $\cos(aTq)=-\cos(aq)$, with $a=\sqrt\pi$. Every pointwise branch is an odd half-period translation or an odd-centred reflection.

A missing operator-algebra step is closed here explicitly. In the free scalar wedge model, the nondegenerate two-dimensional CCR plane generated by the local pair defines a type-I von Neumann subfactor of the wedge algebra. Stone--von Neumann uniqueness with multiplicity therefore lifts every measurable, measure-preserving fibre involution to a bounded wedge-local operator. For translation-cell matchings we also give the requested direct construction as a strong-operator limit of local spectral projections multiplied by local Weyl translations, proving wedge membership, strong convergence, self-adjointness, unitarity, and exact anticommutation.

Within the translation-cell subclass, the exact finite-band correlation is a positive Gaussian quadratic matching functional. We retain the exponential length-sector bound, bandwidth-conditioned shell exchange, Gaussian midpoint compression, shell dynamic program, exhaustive $20$-cell optimization, and alternating-cycle defect theory. At $(r,\omega_0)=(0.1,0.01)$ the translation-cell value $B=2.02505064\ldots$ is now accompanied by an explicit infinite-tail error budget: a finite positive partial sum already gives $B>2.0250506395$, while the omitted exterior contribution is below $1.30\times10^{-12}$ in CHSH units. Thus the sign of the violation does not depend on a truncation convention.

The reflection branch materially enlarges the picture. The global reflection $R(q)=a-q$ is an exact wedge-local fibre involution with closed-form correlation
\[
C_R=D^{-1}\exp[-\pi\delta/(4D)],\qquad
D=\nu_A\nu_B-\kappa^2,
\]
and gives $B_R=2.0603750382\ldots$ at the same reference point, with a Bell-bandwidth threshold $\omega_0=0.02203669210\ldots$ for $r=0.1$. A complete reflection-inclusive enumeration of all $140{,}152$ cellwise involutions in a $12$-cell window gives the still larger cosine-axis value $B=2.0605115444\ldots$. These results show that $2.02505$ is a translation-cell reference value, not an intrinsic ceiling of the exact fibre architecture. No global optimality theorem over arbitrary measurable translation/reflection mixtures is claimed.
\end{abstract}

\section{Problem, motivation, and claim boundary}

Bell nonlocality in relativistic quantum field theory (QFT) is simultaneously an existence problem and an operator-design problem. Summers and Werner proved that vacuum Bell correlations in broad QFT settings can attain the Tsirelson value $2\sqrt2$ \cite{SummersWerner1985,SummersWerner1987I,SummersWerner1987II,SummersWerner1987CMP}. Modular localization, wedge duality, and the canonical commutation relation (CCR) algebra provide the natural language for making these results constructive \cite{BisognanoWichmann1975,BrunettiGuidoLongo2002,Haag1992,Takesaki1970,BratteliRobinson1979}. The difficult point in the Bose case is not the existence of local algebras rich enough to violate CHSH, but the construction of simple bounded Hermitian observables that expose the relevant modular structure.
The wider conceptual background includes Bell and CHSH nonlocality, Tsirelson's quantum bound, the Reeh--Schlieder property, coherent-state methods, and the harmonic- and approximation-theoretic tools that underlie bounded trigonometric operator design \cite{Bell1964,CHSH1969,Tsirelson1980,ReehSchlieder1961,Witten2018,Glauber1963,Sanders2012,HardyLittlewoodPolya1952,Zygmund1959,DeVoreLorentz1993,DritschelRovnyak2010}.

Recent work has attacked that problem from several directions. Modular Weyl operators give explicit bounded local observables \cite{DeFabritiis2023,CaribeModular2026}; coherent-state superpositions can supply non-Gaussian interference \cite{GuimaraesCat2026}; unitary deformations and vertex-operator constructions enlarge the available Bose operator families \cite{Azevedo2026,CaribeVertex2026}. Most directly relevant here, finite odd-harmonic Weyl polynomials provide exactly anticommuting local axes and finite-band violations approaching Tsirelson's bound \cite{CaribeFiniteWeyl2026}. A separate state--measurement synthesis showed that critical cat-state resonances can lower the finite-Weyl support threshold and change the leading finite-degree deficit constant without changing its order \cite{NewtonCat2026,NewtonWeyl2026}.

There is, however, a complementary route. Continuous-variable pseudospin observables can realize Pauli-like algebras on infinite-dimensional Hilbert spaces \cite{ChenPanHouZhang2002,ChenZhang2002,Larsson2004}. Displaced parity and phase-space Bell tests likewise show that dichotomic continuous-variable observables need not resemble smooth one-quadrature sigmoid functions \cite{BanaszekWodkiewicz1999}. These constructions suggest asking whether the modular Bose problem itself admits an exact Cliffordization adapted to the same half-period that produces finite-Weyl anticommutation.

The question studied here is therefore:
\begin{quote}
Given a modularly normalized Bose canonical pair, classify exact local involutions $\Gamma$ satisfying $\Gamma^2=I$ and $\{\Gamma,\cos Q\}=0$, and optimize their finite-band cross-wedge Bell correlation.
\end{quote}
The central object is not a new function such as $\tanh(\phi)$ or another approximation to $\operatorname{sgn}(\cos\phi)$. It is a spectral-fibre permutation of the Bose quadrature.

The results are separated into three layers so that the physical conclusion does not depend on an unproved optimization statement. First, the operator-realization layer classifies the cosine fibres and proves wedge-local realization in the free scalar CCR plane, including an explicit strong-operator translation-cell formula. Second, the translation layer derives the exact Gaussian matching functional, exponential length-sector separation, bandwidth-conditioned shell exchange, midpoint compression, the shell program, the $20$-cell exhaustive phase structure, and the alternating-cycle defect bounds. Third, an adversarial translation-only extension proves that global centred shell compression is false by a complete $12$-cell counterexample, replacing the old conjecture by an enlarged shifted-run/crystal frontier. Fourth, a reflection-inclusive layer gives an exact global-reflection Bell branch and a complete $12$-cell mixed translation/reflection audit. The paper does \emph{not} claim that interval pseudospin operators are new, that the shell grammar exhausts arbitrary translation matchings, or that the $12$-cell mixed winner is globally optimal over every measurable fibre involution.

\section{Modular Bose axes and exact Cliffordization}

\subsection{Finite-Weyl modular normalization}

Let $\varphi(h)$ be a real smeared free scalar field and let $W(h)=e^{i\varphi(h)}$ denote the Weyl operator. We use the convention
\begin{equation}
W(h)W(k)=e^{-i\Delta_{\mathrm{PJ}}(h,k)/2}W(h+k),
\end{equation}
with $\Delta_{\mathrm{PJ}}(h,k)=2\operatorname{Im}\langle h|k\rangle$. For a symplectically normalized local pair $(f,f')$,
\begin{equation}
\Delta_{\mathrm{PJ}}(f,f')=2,
\end{equation}
and with
\begin{equation}
\alpha=\sqrt{\frac{\pi}{2}},
\qquad
A=\alpha\varphi(f),
\qquad
A'=\alpha\varphi(f'),
\end{equation}
we have
\begin{equation}
[A,A']=i\pi.
\end{equation}
Thus $e^{iA}e^{iA'}=-e^{iA'}e^{iA}$. Any bounded odd-harmonic polynomial
\begin{equation}
p(A)=\sum_{m\in\mathcal M}c_m\cos(mA),
\qquad
\mathcal M\subset\{1,3,5,\ldots\},
\end{equation}
therefore anticommutes with the corresponding polynomial in $A'$ \cite{CaribeFiniteWeyl2026}.

The finite-band modular packet supplies complementary-wedge vectors $(f,f',g,g')$ with
\begin{align}
\|f\|^2=\|f'\|^2&=\nu_A,
&\|g\|^2=\|g'\|^2&=\nu_B,\\
\langle f|g\rangle&=\kappa,
&\langle f'|g'\rangle&=-\kappa,
\end{align}
while the mixed inner products vanish. With
\begin{equation}
\lambda^2=e^{-2\pi\omega_0},
\qquad d=r\omega_0,
\qquad
S_d=\frac{\sinh(2\pi d)}{2\pi d},
\end{equation}
set
\begin{equation}
\mu_+=\lambda^2S_d,
\qquad
\mu_-=\lambda^{-2}S_d.
\end{equation}
Then
\begin{align}
\nu_A&=\frac{2}{1-\mu_+}-1,\\
\nu_B&=\frac{2}{\mu_--1}+1,\\
\kappa&=\frac{2}{\sqrt{(1-\mu_+)(\mu_--1)}}.
\end{align}
For $0<r<1$ and $\omega_0\downarrow0$,
\begin{equation}
\nu_A,\nu_B,\kappa=\frac{1}{\pi\omega_0}+O(1),
\qquad
\nu_A+\nu_B-2\kappa\to0.
\end{equation}
These exact finite-band parameters will be retained throughout.

\subsection{The spectral-fibre formulation}

Normalize a local canonical plane by
\begin{equation}
q=\frac{A}{\sqrt\pi},
\qquad
p=\frac{A'}{\sqrt\pi},
\qquad
[q,p]=i.
\end{equation}
It is convenient to set $a=\sqrt\pi$, so that $A=a q$. Consider a measure-preserving composition operator
\begin{equation}
(\Gamma_T\psi)(q)=\psi(Tq).
\end{equation}
If $T$ is an involution, then $\Gamma_T$ is a Hermitian unitary. Exact anticommutation with $X=\cos(aq)$ requires
\begin{equation}
\Gamma_TX\Gamma_T=-X,
\end{equation}
which is pointwise equivalent to
\begin{equation}
\cos(aTq)=-\cos(aq).
\label{eq:fibrecondition}
\end{equation}

\begin{theorem}[Cosine-fibre branch classification]
Every pointwise solution of Eq.~\eqref{eq:fibrecondition} has one of the two forms
\begin{equation}
T(q)=q+(2k+1)a
\label{eq:translationbranch}
\end{equation}
or
\begin{equation}
T(q)=-q+(2k+1)a,
\label{eq:reflectionbranch}
\end{equation}
where $k\in\Z$ may depend measurably on $q$. If $T^2=\mathrm{id}$, the branch labels on paired points are constrained so that $T$ defines a perfect involutive matching between opposite fibres of the cosine spectrum.
\end{theorem}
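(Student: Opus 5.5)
The plan is to solve Eq.~\eqref{eq:fibrecondition} pointwise with elementary trigonometry, then read off measurability and the involution constraint. For fixed $q$ write $x=aq$ and $y=aTq$. The condition $\cos y=-\cos x$ is equivalent to $\cos y=\cos(x+\pi)$. Since $\cos u=\cos v$ holds exactly when $u=\pm v+2\pi k$ for some $k\in\Z$, we get $y=x+\pi+2\pi k$ or $y=-x-\pi+2\pi k'$. Dividing by $a$ and using $a^2=\pi$, so that $\pi/a=a$, gives $Tq=q+(2k+1)a$ or $Tq=-q+(2k'-1)a$. Relabelling $k'-1\mapsto k$ puts the second family in the form of Eq.~\eqref{eq:reflectionbranch}. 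This shows that every pointwise solution lies on one of the two branches, with an integer label depending on $q$.

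Next I would check measurability of the labels. On the set where $Tq-q\in(2\Z+1)a$, define $k(q)=\bigl((Tq-q)/a-1\bigr)/2$, and on the complementary set use $(Tq+q)/a$ in the same way. Since $T$ is measurable, these are measurable integer-valued functions on measurable sets. The two branches overlap only where $2q\in 2a\Z$, that is on the countable set $q\in a\Z$, which is null. So the branch assignment is well defined almost everywhere, and we may fix the translation branch there by convention.

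For the involution statement, set $P_\pm=\{q:\pm\cos(aq)>0\}$. The condition forces $T(P_+)\subset P_-$ and $T(P_-)\subset P_+$, while the zero set $Z$ of $\cos(aq)$ is mapped into itself. If $T^2=\mathrm{id}$, then $T$ is a bijection, so it restricts to mutually inverse bijections $P_+\leftrightarrow P_-$. The pairs $\{q,Tq\}$ are therefore disjoint two-element sets with $q\neq Tq$ off $Z$; the only possible fixed points would satisfy $\cos(aq)=0$. This gives a perfect matching between opposite cosine fibres.

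It remains to state the compatibility of labels on paired points. If $q$ is on the translation branch with label $k$, then $Tq=q+(2k+1)a$. Applying the translation branch again at $Tq$ returns $q$ only when the label there is $-k-1$. A reflection branch at $Tq$ with label $k'$ would return $-Tq+(2k'+1)a=q$ only when $q$ lies in a discrete set, which is null. Conversely, if $q$ is on the reflection branch, then $T(Tq)=-(-q+(2k+1)a)+(2k+1)a=q$ with the same branch and the same label, automatically.

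The main obstacle is only bookkeeping: showing the branch overlaps are null and handling ``pointwise'' versus ``almost everywhere'' carefully. The trigonometric core is immediate.
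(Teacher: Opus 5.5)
Your proposal is correct and follows the paper's argument. You rewrite the condition as an equality of cosines ($\cos(aTq)=\cos(aq+\pi)$ in your version, $\cos(\pi-aq)$ in the paper's), solve $u=\pm v+2\pi n$, divide by $a$ using $\pi/a=a$, and obtain the partner-label constraint by applying $T$ twice. Your additional bookkeeping makes explicit what the paper's short proof leaves implicit: measurability of the label, the null overlap set $a\Z$, and the explicit partner labels $-k-1$ for translations and $k$ for reflections.
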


\begin{proof}
Equation~\eqref{eq:fibrecondition} is equivalent to $\cos(aTq)=\cos(\pi-aq)$. Hence
\begin{equation}
aTq=2\pi n\pm(\pi-aq),
\end{equation}
which gives Eqs.~\eqref{eq:translationbranch}--\eqref{eq:reflectionbranch}. Applying $T$ twice gives the corresponding partner-label constraint. No other pointwise branch exists.
\end{proof}

For almost every $x\in(0,\pi)$, the fibre of $\cos$ at $\cos x$ is
\begin{equation}
\mathcal F_x=\{2\pi n+x,2\pi n-x:n\in\Z\},
\end{equation}
whereas the opposite fibre is $\mathcal F_{\pi-x}$. Exact Cliffordization is therefore a perfect matching
\begin{equation}
\boxed{\mathcal F_x\longleftrightarrow\mathcal F_{\pi-x}}
\end{equation}
chosen measurably in $x$.

\subsection{Wedge-local realization and explicit strong convergence}

The composition picture becomes a local QFT observable only after algebra membership is established. We therefore state that step explicitly. Throughout this subsection $T$ is assumed measurable and Lebesgue-measure preserving; every cellwise translation or reflection involution used below has this property.

Let $K_R$ be the standard real one-particle subspace associated with the right wedge and let $(f,f')\subset K_R$ be the symplectically normalized pair used above. Define
\begin{equation}
E_R=\operatorname{span}_{\R}\{f,f'\},\qquad
\mathfrak N_R=\{W(h):h\in E_R\}''.
\end{equation}
Isotony of the Weyl construction gives $\mathfrak N_R\subset\mathfrak A(W_R)$. Since the symplectic form is nondegenerate on $E_R$, the restricted regular CCR representation is, by the finite-dimensional Stone--von Neumann theorem with multiplicity, unitarily equivalent to one Schr\"odinger degree of freedom tensored with a multiplicity space \cite{BratteliRobinson1979,Takesaki1970}. Hence there is a unitary $U_R$ for which
\begin{equation}
U_R\mathfrak N_RU_R^{-1}=B(L^2(\R))\otimes I,
\end{equation}
and $q=A/\sqrt\pi$, $p=A'/\sqrt\pi$ act as the canonical Schr\"odinger pair on the first factor.

\begin{theorem}[Wedge-local fibre involution]\label{thm:wedgelocal}
Let $T:\R\to\R$ be measurable, Lebesgue-measure preserving and involutive, and suppose
\begin{equation}
\cos(\sqrt\pi\,T(q))=-\cos(\sqrt\pi\,q)
\end{equation}
almost everywhere. Then $C_T\psi=\psi\circ T$ is a Hermitian unitary on $L^2(\R)$ and
\begin{equation}
\boxed{\Gamma_T=U_R^{-1}(C_T\otimes I)U_R\in\mathfrak A(W_R).}
\end{equation}
Moreover
\begin{equation}
\Gamma_T^2=I,\qquad
\Gamma_T\cos A\,\Gamma_T=-\cos A.
\end{equation}
The analogous construction in the left wedge gives $\Gamma_B\in\mathfrak A(W_L)$; wedge locality therefore implies $[\Gamma_A,\Gamma_B]=0$.
\end{theorem}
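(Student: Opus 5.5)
The proof splits into three independent pieces: the operator properties of $C_T$ on $L^2(\R)$, the transport of $C_T\otimes I$ into the wedge algebra via $U_R$, and locality.

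\emph{Step 1: $C_T$ is a Hermitian unitary.} Because $T$ is measurable and Lebesgue-measure preserving, the change of variables $\int|\psi(Tq)|^2dq=\int|\psi(q)|^2dq$ shows that $C_T$ is an isometry. Since $T^2=\mathrm{id}$ almost everywhere, $C_T^2\psi=\psi\circ T\circ T=\psi$. Hence $C_T$ is invertible with $C_T^{-1}=C_T$, so it is unitary. A unitary that equals its own inverse satisfies $C_T^*=C_T^{-1}=C_T$, so it is Hermitian.

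\emph{Step 2: the anticommutation relation.} Let $M_X$ denote multiplication by $X(q)=\cos(\sqrt\pi q)$. Then
\[
(C_TM_XC_T\psi)(q)=X(Tq)\,\psi(T^2q)=X(Tq)\psi(q)=-X(q)\psi(q)
\]
almost everywhere, by the hypothesis. Null sets are harmless, because measure preservation sends null sets to null sets. Therefore $C_TM_XC_T=-M_X$.

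\emph{Step 3: transport into the wedge.} By the Stone--von Neumann statement preceding the theorem, $U_R\mathfrak N_RU_R^{-1}=B(L^2(\R))\otimes I$, and $q$ acts as the Schr\"odinger position operator on the first factor. Since $C_T\in B(L^2(\R))$, we get $C_T\otimes I\in U_R\mathfrak N_RU_R^{-1}$, so $\Gamma_T\in\mathfrak N_R\subset\mathfrak A(W_R)$ by isotony.

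Unitary conjugation preserves all the algebraic relations, so $\Gamma_T$ is a Hermitian unitary with $\Gamma_T^2=I$. We also need $\cos A=\cos(\sqrt\pi q)$ to correspond to $M_X\otimes I$. This follows from the functional calculus: $U_R A U_R^{-1}=\sqrt\pi\,q\otimes I$ as self-adjoint operators, because the Weyl unitaries $e^{itA}=W(t\alpha f)$ are mapped to $e^{it\sqrt\pi q}\otimes I$. Conjugating Step 2 then gives $\Gamma_T\cos A\,\Gamma_T=-\cos A$.

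\emph{Step 4: left wedge and locality.} The same construction with $E_L=\operatorname{span}_\R\{g,g'\}\subset K_L$, which is nondegenerate after the analogous normalization, gives $\Gamma_B\in\mathfrak A(W_L)$. Wedge duality/locality, $\mathfrak A(W_L)\subset\mathfrak A(W_R)'$, then yields $[\Gamma_A,\Gamma_B]=0$.

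\emph{Main obstacle.} The delicate step is Step 3, specifically justifying the Stone--von Neumann reduction. Regularity of the restricted representation is needed so that $t\mapsto W(th)$ is strongly continuous on $E_R$; this holds in the Fock/vacuum representation. One must also verify that the Schr\"odinger coordinate on the first factor is exactly $A/\sqrt\pi$, not merely equivalent to it up to a symplectic transformation.

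I would handle this by choosing $U_R$ so that it intertwines the Weyl system $W(s f+t f')$ with $e^{i\alpha(sA_0+tA'_0)}\otimes I$, where $A_0=\sqrt\pi\,q$ and $A_0'=\sqrt\pi\,p$. The canonical normalization $[A,A']=i\pi$ guarantees that these Weyl relations agree, and the uniqueness theorem then supplies $U_R$ with the required property.
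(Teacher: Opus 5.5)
Your proposal is correct and follows the same route as the paper's proof. Both arguments get unitarity and Hermiticity of $C_T$ from measure preservation and $T^2=\mathrm{id}$, membership from the type-I factor $U_R\mathfrak N_RU_R^{-1}=B(L^2(\R))\otimes I$, the cosine relation from $C_TM_FC_T=M_{F\circ T}$, and commutation from locality of complementary wedges; you simply spell these steps out in more detail. One slip needs fixing: since $\varphi(f)=A/\alpha$, the intertwiner in your last paragraph should send $W(sf+tf')$ to $e^{i(sA_0+tA_0')/\alpha}\otimes I$, not $e^{i\alpha(sA_0+tA_0')}\otimes I$, because the latter has commutator $i\pi^2/2$ instead of $i\Delta_{\mathrm{PJ}}(f,f')=2i$; the corrected form matches your earlier, correct statement $e^{itA}=W(t\alpha f)\mapsto e^{it\sqrt\pi q}\otimes I$.
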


\begin{proof}
Measure preservation makes $C_T$ unitary and $T^2=\mathrm{id}$ makes it Hermitian. Since $C_T\in B(L^2(\R))$, the type-I CCR factor above places its lift in the right-wedge algebra. In the Schr\"odinger representation, $C_TM_FC_T=M_{F\circ T}$ for every bounded Borel function $F$, which gives the cosine identity. The left-wedge statement is identical, and commutation follows from locality of the complementary wedge algebras. The theorem is scoped to the free-field CCR subalgebra and is not asserted for an arbitrary interacting local algebra.
\end{proof}

\noindent\textbf{Scope and algebraic status of the locality statement.}
The preceding theorem is specifically a statement about the nondegenerate two-dimensional CCR plane generated by a symplectically normalized local canonical pair $(f,f')\subset K_R$ in the free scalar wedge. The associated von Neumann algebra
\[
\mathfrak N_R=\{W(h):h\in\operatorname{span}_{\mathbb R}\{f,f'\}\}''
\]
is a type-I subfactor of the wedge algebra $\mathfrak A(W_R)$. Stone--von Neumann uniqueness with multiplicity identifies this factor with
\[
B(L^2(\mathbb R))\otimes I,
\]
so that a measurable measure-preserving fibre involution $T$ satisfying the cosine-fibre condition defines a bounded operator $\Gamma_T\in\mathfrak N_R\subset\mathfrak A(W_R)$.

For a general measurable fibre involution, this algebra-membership statement does not mean that $\Gamma_T$ is a finite linear combination of Weyl operators. The translation-cell subclass has the stronger explicit realization
\[
\Gamma_\sigma=
\operatorname*{s\! -\! lim}_{F\uparrow\mathbb Z}
\sum_{n\in F}E_nV_n,
\]
where every finite partial sum belongs to the local CCR algebra and convergence is in the strong-operator topology. The locality statement is therefore an operator-algebraic result. It should be distinguished from the separate operational question of how such a structured observable could be implemented by a concrete detector, probe, or measurement protocol.

For the translation-cell subclass one can see the membership directly in precisely the projection--Weyl form. Let
\begin{equation}
E_n=\mathbf 1_{I_n}(q),\qquad
d_n=[\sigma(n)-n]a,\qquad
V_n=e^{id_np}.
\end{equation}
Then $E_n,V_n\in\mathfrak N_R$ and $(V_n\psi)(q)=\psi(q+d_n)$. For finite $\sigma$-invariant sets $F\subset\Z$ define
\begin{equation}
\Gamma_{\sigma,F}=\sum_{n\in F}E_nV_n.
\end{equation}
Since $V_n^*E_nV_n=E_{\sigma(n)}$ and $d_{\sigma(n)}=-d_n$, the summands are partial isometries with mutually orthogonal ranges and complete transposition pairs are adjoint to one another. For every $\psi$,
\begin{equation}
\left\|\sum_{n\notin F}E_nV_n\psi\right\|^2
=\sum_{n\notin F}\|E_{\sigma(n)}\psi\|^2\longrightarrow0.
\end{equation}
Consequently
\begin{equation}
\boxed{\Gamma_\sigma=
\operatorname*{s\! -\! lim}_{F\uparrow\Z}\sum_{n\in F}E_nV_n
\in\mathfrak A(W_R).}
\label{eq:stronglocalgamma}
\end{equation}
The limit is Hermitian and unitary, and on $I_n$ it acts as $\psi(q)\mapsto\psi(q+d_n)$. Because $\sigma(n)-n$ is odd, $a d_n=\pi[\sigma(n)-n]$, so Eq.~\eqref{eq:stronglocalgamma} also gives $\{\Gamma_\sigma,\cos A\}=0$. This supplies the explicit spectral-projection/Weyl-unitary construction and strong-operator convergence required for the local Bell interpretation.

\begin{figure}[t]
\centering
\begin{tikzpicture}[x=0.275cm,y=0.54cm,>=Latex]
  \draw[->] (-8.5,0)--(8.5,0) node[right] {$q$};
  \foreach \x in {-8,-6,...,8}{\draw[gray!50] (\x,-.12)--(\x,.12);}
  \draw[domain=-8:8,samples=220,smooth,thick] plot(\x,{0.8*cos(90*\x)});
  \draw[dashed] (-8,-1.25)--(8,-1.25);
  \node[anchor=east] at (-8,-1.25) {opposite fibres};
  \foreach \x/\y in {-5.2/5.2,-3.2/3.2,-1.2/1.2}{
    \draw[->,bend left=28,thick] (\x,-.2) to (\y,-.2);
  }
  \node[align=center,text width=0.88\columnwidth,font=\scriptsize] at (0,-2.1) {Exact Bose Cliffordization is a measurable matching between points whose cosine values differ by a sign.};
\end{tikzpicture}
\caption{Spectral-fibre picture. The two primitive pointwise branches are odd half-period translations and odd-centred reflections. The translation-cell subclass studied below turns the problem into an integer matching problem.}
\label{fig:fibre}
\end{figure}

\section{Translation-cell involutions and the exact Gaussian functional}

\subsection{Cell matching}

Partition the real line into half-period cells
\begin{equation}
I_n=[na,(n+1)a),
\qquad n\in\Z.
\end{equation}
Let $\sigma:\Z\to\Z$ be a fixed-point-free involution satisfying
\begin{equation}
\sigma^2(n)=n,
\qquad
\sigma(n)-n\;\text{odd}.
\end{equation}
Define
\begin{equation}
T_\sigma(q)=q+[\sigma(n)-n]a,
\qquad q\in I_n.
\end{equation}
Then $T_\sigma^2=\mathrm{id}$ and
\begin{equation}
\cos(aT_\sigma q)=-\cos(aq).
\end{equation}
Thus
\begin{equation}
\Gamma_\sigma^2=I,
\qquad
\{\Gamma_\sigma,\cos A\}=0.
\end{equation}
This includes adjacent-cell pseudospins of the type familiar from position-binning constructions \cite{Larsson2004} but permits arbitrary opposite-parity perfect matchings.

Each transposition $\{n,m\}$ of $\sigma$ is represented once by the two invariants
\begin{equation}
c=\frac{n+m+1}{2}\in\Z,
\qquad
\ell=|m-n|\in\{1,3,5,\ldots\}.
\end{equation}
The midpoint cell is
\begin{equation}
J_c=[(c-\tfrac12)a,(c+\tfrac12)a).
\end{equation}
We call $c$ the midpoint index and $\ell$ the translation length.

\subsection{Gaussian kernel}

For the two complementary wedges define
\begin{equation}
Q=\frac12
\begin{pmatrix}
\nu_A&\kappa\\
\kappa&\nu_B
\end{pmatrix},
\qquad
P=\frac12
\begin{pmatrix}
\nu_A&-\kappa\\
-\kappa&\nu_B
\end{pmatrix}.
\end{equation}
Both are positive definite. If $M\sim N(0,Q)$, define the exact rectangle probabilities
\begin{equation}
P_{cc'}=\Prob(M_A\in J_c,M_B\in J_{c'}).
\end{equation}
For odd positive lengths $\ell,m$, summing the two translation orientations in each transposition gives
\begin{align}
W_{\ell m}
={}&2\exp\!\left[-\frac{\pi}{4}
(\nu_A\ell^2+\nu_Bm^2-2\kappa\ell m)\right]
\nonumber\\
&+2\exp\!\left[-\frac{\pi}{4}
(\nu_A\ell^2+\nu_Bm^2+2\kappa\ell m)\right].
\label{eq:Wlm}
\end{align}

\begin{theorem}[Exact finite-band translation-matching functional]
Let $\mathcal P(\sigma)$ denote the transpositions of a finite or probability-truncated translation matching, and let $(c_p,\ell_p)$ be their midpoint and length data. Then the two-wedge Clifford correlation is
\begin{equation}
\boxed{
C_\Gamma(\sigma)=
\sum_{p,q\in\mathcal P(\sigma)}
P_{c_pc_q}W_{\ell_p\ell_q}.}
\label{eq:exactobjective}
\end{equation}
The formula is exact at nonzero modular bandwidth.
\end{theorem}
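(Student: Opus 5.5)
The plan is to compute $C_\Gamma(\sigma)=\langle\Omega,\Gamma_{\sigma_A}\Gamma_{\sigma_B}\Omega\rangle$ directly from the strong-operator representation \eqref{eq:stronglocalgamma}. First I truncate to finitely many transpositions; the probability-truncated case is then the limit, by strong convergence and boundedness. Write each $\Gamma$ as $\sum_{p}(E_nV_n+E_mV_m)$ over transpositions $p=\{n,m\}$. Bilinearity then reduces $C_\Gamma$ to a double sum over pairs $(p,q)$ of four-term blocks of the form
\[
\langle\Omega,\mathbf 1_{I_n}(q_A)e^{id_np_A}\,\mathbf 1_{I_{n'}}(q_B)e^{id'_{n'}p_B}\Omega\rangle .
\]
The $A$ and $B$ factors commute by Theorem~\ref{thm:wedgelocal}, so the ordering inside a block is harmless. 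The vacuum restricted to the four fields $(q_A,p_A,q_B,p_B)$ is quasi-free. Its two-point data follow from $\|f\|^2=\nu_A$, $\|g\|^2=\nu_B$, $\langle f|g\rangle=\kappa$, $\langle f'|g'\rangle=-\kappa$ and the vanishing of the mixed products, and they give exactly the covariances $Q$ (for $q$) and $P$ (for $p$), with no $q$--$p$ cross terms. The normalization $A=\alpha\varphi$, $q=A/\sqrt\pi$ has to be checked to produce the factor $\tfrac12$.

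The main step is to evaluate one block. I will write the indicator as a Fourier integral, $\mathbf 1_{I}(q)=\int\hat{\mathbf 1}_I(t)e^{itq}\,dt$. Then I combine $e^{itq}e^{idp}=e^{itd/2}W(t,d)$ by Baker--Campbell--Hausdorff, where $W(t,d)=e^{i(tq+dp)}$. The quasi-free formula $\langle W\rangle=\exp(-\tfrac12\langle\cdot,\mathrm{Cov}\,\cdot\rangle)$ then factorizes, because $Q$ and $P$ do not mix, into
\[
\exp\!\big[-\tfrac12(t,t')Q(t,t')^{\!\top}\big]\cdot\exp\!\big[-\tfrac12(d,d')P(d,d')^{\!\top}\big].
\]
The phase $e^{itd/2}$ translates the indicator by $d/2$. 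Inverting the Fourier transform therefore turns the first factor into the Gaussian probability that $(M_A,M_B)\sim N(0,Q)$ lies in the product of the shifted cells $I_n+d_n/2$ and $I_{n'}+d'/2$. Since $d_n=(m-n)a$, the shifted cell $I_n+d_n/2$ is exactly the midpoint cell $J_{c_p}$, and the same shifted cell arises from either orientation $n$ or $m$. This is how $P_{c_pc_q}$ appears.

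The second factor, with $d=\pm\ell_pa$, $d'=\pm\ell_qa$ and $a^2=\pi$, equals $\exp[-\tfrac{\pi}{4}(\nu_A\ell_p^2+\nu_B\ell_q^2\mp2\kappa\ell_p\ell_q)]$. Among the four orientation combinations, the two with equal relative sign give one exponential and the two with opposite sign give the other. Each occurs twice, which produces the factor $2$ in each term of \eqref{eq:Wlm}.

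The obstacle I expect is bookkeeping of signs and phases. I must confirm that the BCH phases and the sign of the $P$ off-diagonal entry (coming from $\langle f'|g'\rangle=-\kappa$) combine so that no residual oscillatory phase survives. I must also confirm that every cross-block term is real and positive, as the stated formula requires. I would check this first on a single adjacent transposition ($\ell=1$) in each wedge and then extend by linearity. Finally, exactness at nonzero bandwidth is automatic, since no approximation enters beyond the quasi-free evaluation.
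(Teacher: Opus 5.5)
Your proposal is correct and arrives at the same decomposition as the paper: a midpoint Gaussian probability over $J_{c_p}\times J_{c_q}$, times a displacement Gaussian governed by $P$, summed over the four orientations $(s,t)\in\{\pm1\}^2$. You get there from the Weyl-operator side, whereas the paper works with the Gaussian vacuum kernel itself.

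The paper evaluates the kernel between $x$ and $T_\sigma x$ in midpoint--displacement variables $m=(x+T_\sigma x)/2$, $d=T_\sigma x-x$. On each product of paired cells this change of variables is a pure translation, and the kernel factorizes as $\phi_Q(m)\,e^{-d^TPd/2}$. Integrating $m$ over $J_{c_p}\times J_{c_q}$ is then literally $P_{c_pc_q}$. Your BCH-plus-characteristic-function computation is the Fourier dual of this: the quasi-free exponent splits into a $Q$-part in $t$ and a $P$-part in $d$, and Fourier inversion of the $Q$-part over shifted cells recovers the same probability.

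What your route buys:
\begin{itemize}
\item It works directly with the strong sum \eqref{eq:stronglocalgamma}, i.e.\ with operators already shown to be wedge-local.
\item It makes the passage to infinite matchings explicit.
\end{itemize}
What the paper's route buys:
\begin{itemize}
\item It avoids the conditionally convergent Fourier representation of $\mathbf 1_I$. Since $\hat{\mathbf 1}_I\notin L^1$, you would have to justify that step by smoothing the indicators and using bounded convergence in the spectral calculus.
\item It carries over unchanged to reflection blocks. Their displacement varies across the cell, and they are not spectral projections times Weyl translations. This is how the paper obtains Theorem~\ref{thm:globalreflection} and the mixed kernel \eqref{eq:mixedkernel}.
\end{itemize}

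One sign needs fixing. Since $[itq,idp]=-itd$, BCH gives $e^{itq}e^{idp}=e^{-itd/2}e^{i(tq+dp)}$, not $e^{+itd/2}$. With the correct sign, Fourier inversion produces $\mathbf 1_I(M_A-d/2)$, i.e.\ the event $M_A\in I_n+d_n/2=J_{c_p}$ that you assert. With the sign as you wrote it, you would instead land on $I_n-d_n/2$, which is not the midpoint cell.

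This also disposes of the obstacle you anticipated. The entire BCH phase is absorbed into that shift, so no oscillatory phase survives. The sign of the off-diagonal entry of $P$ only determines which two orientation pairs carry $-2\kappa\ell_p\ell_q$ and which carry $+2\kappa\ell_p\ell_q$. Every block is therefore a probability times a real Gaussian factor, hence real and nonnegative, as the formula requires.
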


\begin{proof}
For a cell translated by signed odd length $s\ell$, $s=\pm1$, the Gaussian density kernel between $q$ and $T_\sigma q$ factors into a midpoint term and a displacement term. The midpoint integration over the paired cell gives $P_{cc'}$; summing $(s,t)\in\{\pm1\}^2$ yields Eq.~\eqref{eq:Wlm}. Summing over transpositions gives Eq.~\eqref{eq:exactobjective}.
\end{proof}

This is a quadratic assignment problem in the matching variables. General QAPs are difficult, but ordered structure can turn special cases into tractable problems \cite{Cela1998,BurkardDellAmicoMartello2009,BurkardKlinzRudolf1996,LaurentSeminaroti2015}. Here the crucial extra structure is not ordinary additively Monge cost; it is an exponentially strong length-selection rule.

\section{Length-sector separation and the shell exchange law}

\subsection{Exact mismatch bound}

Define
\begin{equation}
D=\nu_A\nu_B-\kappa^2>0,
\qquad
\delta=\nu_A+\nu_B-2\kappa>0,
\end{equation}
and
\begin{equation}
\boxed{\gamma=\frac{D}{\delta}.}
\label{eq:gamma}
\end{equation}

\begin{lemma}[Quadratic mismatch inequality]
For all real $x,y$,
\begin{align}
\nu_Ax^2+\nu_By^2-2\kappa xy&\ge\gamma(x-y)^2,\\
\nu_Ax^2+\nu_By^2+2\kappa xy&\ge\gamma(x+y)^2.
\end{align}
\end{lemma}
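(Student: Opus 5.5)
The plan is to treat each inequality as a statement about a positive semidefinite binary quadratic form. For the first one, consider
\[
F(x,y)=\nu_Ax^2+\nu_By^2-2\kappa xy-\gamma(x-y)^2
=(\nu_A-\gamma)x^2+(\nu_B-\gamma)y^2-2(\kappa-\gamma)xy .
\]
It suffices to show that the symmetric matrix of $F$ is positive semidefinite. For a $2\times2$ matrix this means a nonnegative determinant together with nonnegative diagonal entries.

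The key computation is the determinant. Expanding $(\nu_A-\gamma)(\nu_B-\gamma)-(\kappa-\gamma)^2$, the $\gamma^2$ terms cancel, leaving
\[
\nu_A\nu_B-\kappa^2-\gamma(\nu_A+\nu_B-2\kappa)=D-\gamma\delta=0
\]
by the definition \eqref{eq:gamma}. So $\gamma$ is exactly the largest constant for which the form stays semidefinite. The matrix is singular and has rank at most one.

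For the diagonal entries we need $\nu_A\ge\gamma$ and $\nu_B\ge\gamma$. I would note that
\[
\nu_A\delta-D=\nu_A^2-2\kappa\nu_A+\kappa^2=(\nu_A-\kappa)^2\ge0,
\]
and $\delta>0$, so $\nu_A\ge\gamma$. The argument for $\nu_B$ is symmetric. Both $D>0$ and $\delta>0$ are given in the paper: positive-definiteness of $Q$ gives $D>0$, and $\delta>0$ is stated.

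There is a cleaner route that avoids the case check. Since the determinant vanishes, $F$ should be a perfect square. Indeed,
\[
F=\frac{\bigl((\nu_A-\kappa)x-(\nu_B-\kappa)y\bigr)^2}{\delta},
\]
which one verifies by matching coefficients using $D=\gamma\delta$. I would present this identity as the proof: it is manifestly nonnegative and also identifies the equality case.

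The second inequality follows from the first by the substitution $y\mapsto-y$. This turns $-2\kappa xy$ into $+2\kappa xy$ and $(x-y)^2$ into $(x+y)^2$. Equivalently, replace $\kappa$ by $-\kappa$; this leaves $D$ unchanged but would alter $\delta$, which is why the substitution on $y$ is the correct move.

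The only real obstacle is bookkeeping in the perfect-square identity. I would check it against the coefficients:
\begin{itemize}
\item For $x^2$: $(\nu_A-\kappa)^2/\delta=\nu_A-\gamma$, by the computation above.
\item For $y^2$: $(\nu_B-\kappa)^2/\delta=\nu_B-\gamma$, by symmetry.
\item For $xy$: we need $(\nu_A-\kappa)(\nu_B-\kappa)=(\kappa-\gamma)\delta$. This reduces to $\nu_A\nu_B-\kappa(\nu_A+\nu_B)+\kappa^2=\kappa\delta-D$, which holds identically.
\end{itemize}
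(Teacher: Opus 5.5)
Your first argument is a complete and correct proof. The form $F=\nu_Ax^2+\nu_By^2-2\kappa xy-\gamma(x-y)^2$ has determinant $D-\gamma\delta=0$ and diagonal entries $\nu_A-\gamma=(\nu_A-\kappa)^2/\delta\ge0$ and $\nu_B-\gamma=(\nu_B-\kappa)^2/\delta\ge0$. Hence its $2\times2$ matrix is positive semidefinite, and $y\mapsto-y$ gives the second inequality.

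The perfect-square identity you say you would present as the proof, however, is false as written. Your check of the $xy$ coefficient has a sign slip:
\[
\kappa\delta-D=\kappa\nu_A+\kappa\nu_B-\kappa^2-\nu_A\nu_B=-(\nu_A-\kappa)(\nu_B-\kappa).
\]
So $(\nu_A-\kappa)(\nu_B-\kappa)=(\kappa-\gamma)\delta$ fails unless $(\nu_A-\kappa)(\nu_B-\kappa)=0$. Concretely, take $\nu_A=\nu_B=2$ and $\kappa=1$. Then $\delta=2$, $D=3$, $\gamma=3/2$ and $F=\tfrac12(x+y)^2$, whereas your formula gives $\tfrac12(x-y)^2$. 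The correct certificate is
\[
\nu_Ax^2+\nu_By^2-2\kappa xy-\gamma(x-y)^2=\frac{\bigl((\nu_A-\kappa)x+(\nu_B-\kappa)y\bigr)^2}{\delta}.
\]
The square you wrote, with the minus sign, is actually the certificate for the second inequality: it is what the corrected identity becomes under $y\mapsto-y$. With the sign fixed, either of your routes proves the lemma.

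Your approach differs from the paper's. The paper minimizes $x^TM_-x$ on the line $(1,-1)x=1$ and uses the closed form $1/(a^TM_-^{-1}a)$ for the constrained minimum, which is Cauchy--Schwarz in the $M_-$ metric. It then tacitly extends to all $(x,y)$ by homogeneity, and redoes the computation with $M_+$ and $(1,1)$ for the second inequality. That route explains where $\gamma=D/\delta$ comes from, but it relies on positive-definiteness of $M_\pm$.

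Your sum-of-squares identity, once corrected, is more elementary, can be checked coefficient by coefficient, and needs only $\delta>0$. It also displays the equality direction $(\nu_A-\kappa)x+(\nu_B-\kappa)y=0$, i.e.\ $(x,y)\propto(\nu_B-\kappa,\,-(\nu_A-\kappa))\propto M_-^{-1}(1,-1)^T$, which is exactly the paper's constrained minimizer. Finally, reducing the second inequality to the first by $y\mapsto-y$ is cleaner than repeating the computation with $M_+$.
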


\begin{proof}
For the first inequality minimize the positive quadratic form
$x^TM_-x$, $M_-=(\begin{smallmatrix}\nu_A&-\kappa\\-\kappa&\nu_B\end{smallmatrix})$, subject to $(1,-1)x=1$. The minimum equals
$[ (1,-1)M_-^{-1}(1,-1)^T]^{-1}=D/\delta$. The second follows identically with $M_+=(\begin{smallmatrix}\nu_A&\kappa\\\kappa&\nu_B\end{smallmatrix})$ and constraint $(1,1)x=1$.
\end{proof}

\begin{theorem}[Exponential off-length separation]\label{thm:offlength}
For distinct positive odd integers $\ell\ne m$,
\begin{equation}
\boxed{W_{\ell m}\le4e^{-\pi\gamma}.}
\label{eq:offlength}
\end{equation}
Consequently, for a matching with $N$ transpositions,
\begin{equation}
0\le R_\sigma
:=\sum_{\ell_p\ne\ell_q}P_{c_pc_q}W_{\ell_p\ell_q}
\le4N^2e^{-\pi\gamma}.
\label{eq:remainderbound}
\end{equation}
\end{theorem}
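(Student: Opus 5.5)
The plan is to bound each of the two exponentials in Eq.~\eqref{eq:Wlm} separately with the quadratic mismatch inequality, and then to sum those bounds against the probability weights $P_{cc'}$. No further analytic input should be needed.

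\emph{Step 1: the first exponent.} Take $x=\ell$, $y=m$ in the first inequality of the Quadratic mismatch lemma. This gives
\[
\nu_A\ell^2+\nu_Bm^2-2\kappa\ell m\ge\gamma(\ell-m)^2 .
\]
Because $\ell\ne m$ and both are odd, $|\ell-m|$ is a nonzero even integer. Hence $(\ell-m)^2\ge4$.

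\emph{Step 2: the second exponent.} The second inequality of the lemma gives
\[
\nu_A\ell^2+\nu_Bm^2+2\kappa\ell m\ge\gamma(\ell+m)^2 .
\]
Since $\ell,m\ge1$ are odd and distinct, $\ell+m\ge4$, so $(\ell+m)^2\ge16\ge4$.

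\emph{Step 3: the pointwise bound.} With $\gamma>0$, each exponent $-\tfrac{\pi}{4}(\cdots)$ is therefore at most $-\tfrac{\pi}{4}\cdot4\gamma=-\pi\gamma$. Each of the two terms in $W_{\ell m}$ is then at most $2e^{-\pi\gamma}$, which yields Eq.~\eqref{eq:offlength}.

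The main point to check is the even-gap observation in Step 1. This is the only place where the hypothesis $\ell\ne m$ enters, and it is what produces the constant $4$ in $(\ell-m)^2\ge4$.

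\emph{Step 4: the remainder bound.} Every term of $R_\sigma$ is nonnegative, since $P_{cc'}\ge0$ and $W_{\ell m}>0$; this gives $R_\sigma\ge0$. For the upper bound, apply Eq.~\eqref{eq:offlength} to each off-length pair:
\[
R_\sigma\le4e^{-\pi\gamma}\sum_{p,q}P_{c_pc_q}.
\]
Each $P_{c_pc_q}$ is a probability, so it is at most $1$. There are at most $N^2$ ordered pairs $(p,q)$, which gives $R_\sigma\le4N^2e^{-\pi\gamma}$.

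The midpoint indices of distinct transpositions may coincide, so the rectangles $J_{c_p}\times J_{c_q}$ need not be disjoint. The crude bound $P\le1$ is nonetheless enough for the stated estimate, and I would not try to use $\sum P\le1$.
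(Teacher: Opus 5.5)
Your proposal is correct and follows the paper's proof essentially step for step. You apply the quadratic mismatch lemma to each exponential in $W_{\ell m}$, using $|\ell-m|\ge2$ for the first term. For the second term the paper only needs $\ell+m\ge2$, so your sharper $\ell+m\ge4$ is harmless but unnecessary. You then bound $R_\sigma$ using $0\le P_{cc'}\le1$ and at most $N^2$ ordered transposition pairs, exactly as the paper does.
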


\begin{proof}
Distinct odd lengths obey $|\ell-m|\ge2$, while $\ell+m\ge2$. Apply the preceding lemma to the two exponentials in Eq.~\eqref{eq:Wlm}. Each is at most $2e^{-\pi\gamma}$. The remainder bound follows from $0\le P_{cc'}\le1$ and at most $N^2$ ordered transposition pairs.
\end{proof}

At $r=0.1$ one finds
\begin{center}
\begin{tabular}{cccc}
\toprule
$\omega_0$ & $D$ & $\gamma$ & $e^{-\pi\gamma}$\\
\midrule
$2\times10^{-2}$ & 1.00666 & 15.9205 & $1.90\times10^{-22}$\\
$10^{-2}$ & 1.00666 & 31.8332 & $3.69\times10^{-44}$\\
$2\times10^{-3}$ & 1.00667 & 159.154 & $7.15\times10^{-218}$\\
\bottomrule
\end{tabular}
\end{center}
Thus the full QAP becomes numerically and analytically close to a direct sum over equal-length sectors long before the asymptotic limit is reached.

\subsection{The universal exchange conjecture is false}

A tempting conjecture is that any nested unequal-length cohort should always be replaced by equal lengths. It is false. The simplest nontrivial cohort makes this failure explicit.

Consider three nested rainbow pairs with common midpoint $c=0$ and lengths
\begin{equation}
(1,3,5).
\end{equation}
They use exactly the same six cells as the three-pair shell
\begin{equation}
\ell=3,
\qquad c\in\{-1,0,1\}.
\end{equation}
Define the isolated exact exchange gain
\begin{align}
\Delta_{3,1}(\omega_0,r)
={}&W_{33}\sum_{j,k=-1}^{1}P_{jk}
\nonumber\\
&-P_{00}\sum_{\ell,m\in\{1,3,5\}}W_{\ell m}.
\label{eq:firstgain}
\end{align}
At $r=0.1$,
\begin{equation}
\Delta_{3,1}(0.02,0.1)=-0.0351178103<0,
\end{equation}
whereas
\begin{equation}
\Delta_{3,1}(0.01,0.1)=0.0108093850>0.
\end{equation}
Thus no bandwidth-independent exchange lemma can be correct.

\begin{figure}[t]
\centering
\begin{tikzpicture}
\begin{axis}[
 width=\columnwidth,height=5.2cm,
 xlabel={$\omega_0$},ylabel={$\Delta_{3,1}$},
 xmin=.004,xmax=.022,ymin=-.045,ymax=.018,
 grid=major, tick label style={font=\scriptsize},label style={font=\small}]
\addplot[thick,mark=*] coordinates {
(.005,.01347213) (.0075,.013689) (.01,.01080938) (.0125,.00443)
(.0135,.00002814) (.015,-.00651629) (.0175,-.0199) (.02,-.03511781)};
\addplot[dashed] coordinates {(.0135070266,-.045)(.0135070266,.018)};
\node[anchor=west,font=\scriptsize] at (axis cs:.0137,.011) {$\omega_*=0.0135070266\ldots$};
\end{axis}
\end{tikzpicture}
\caption{Bandwidth-conditioned shell exchange. The basic $(1,3,5)\to(3,3,3)$ exchange changes sign at a finite modular bandwidth. The universal ``shellization always helps'' hypothesis is therefore false.}
\label{fig:exchange}
\end{figure}

\subsection{Exact cohort identity}

The preceding example generalizes. Let $k=2h+1$ and consider a consecutive nested rainbow cohort
\begin{equation}
\mathcal R_{s,h}
=\{(c=0,\ell=2(s+j)+1):j=0,\ldots,2h\}.
\end{equation}
Its median length is
\begin{equation}
L=2(s+h)+1.
\end{equation}
Define the equal-length shell
\begin{equation}
\mathcal S_{s,h}
=\{(c=j,\ell=L):j=-h,\ldots,h\}.
\end{equation}

\begin{lemma}[Cell-preserving shell exchange]
$\mathcal R_{s,h}$ and $\mathcal S_{s,h}$ use exactly the same set of $4h+2$ half-period cells.
\end{lemma}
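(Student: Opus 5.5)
The plan is a direct index computation. First I translate the invariants $(c,\ell)$ of a transposition back into its two cells. Then I compare the resulting cell sets of $\mathcal R_{s,h}$ and $\mathcal S_{s,h}$ as integer intervals.

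\textbf{Step 1: cells of a single transposition.} For a transposition $\{n,m\}$ with $n<m$, the definitions $c=(n+m+1)/2$ and $\ell=m-n$ (odd) can be inverted. This gives
\begin{equation*}
n=c-\tfrac{\ell+1}{2},\qquad m=c+\tfrac{\ell-1}{2}.
\end{equation*}
So the pair labelled $(c,\ell)$ occupies exactly the cells $I_{c-(\ell+1)/2}$ and $I_{c+(\ell-1)/2}$. This is consistent with the midpoint cell $J_c$ sitting halfway between them.

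\textbf{Step 2: the rainbow cohort.} For $\mathcal R_{s,h}$ take $c=0$ and $\ell=2(s+j)+1$. Then $(\ell+1)/2=s+j+1$ and $(\ell-1)/2=s+j$, so the $j$-th pair uses cells $-(s+j+1)$ and $s+j$. As $j$ runs over $0,\ldots,2h$, these sweep the left block $\{-(s+2h+1),\ldots,-(s+1)\}$ and the right block $\{s,\ldots,s+2h\}$.

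\textbf{Step 3: the shell.} For $\mathcal S_{s,h}$ take $c=j$ with $j=-h,\ldots,h$, and $\ell=L=2(s+h)+1$. Then $(L+1)/2=s+h+1$ and $(L-1)/2=s+h$, so the pair uses cells $j-(s+h+1)$ and $j+s+h$. As $j$ ranges over $-h,\ldots,h$, the left cells run from $-(s+2h+1)$ to $-(s+1)$ and the right cells from $s$ to $s+2h$. These are the same two blocks as in Step 2.

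\textbf{Step 4: counting.} With the implicit convention $s\ge0$, every left cell index is $\le -1$ and every right cell index is $\ge 0$, so the two blocks are disjoint. Each block has $2h+1$ cells, giving $4h+2$ in total. Within each family the $2h+1$ pairs are pairwise disjoint, because distinct $j$ give distinct left and distinct right indices. Hence each family is a legitimate partial matching on exactly these $4h+2$ cells.

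The only point I would check carefully is the orientation convention in Step 1, i.e.\ whether the $+1$ in $c$ attaches to the left or the right endpoint. Getting this wrong shifts both families by the same amount, so the equality of cell sets survives either way. I would also verify the case $h=1$, $s=0$ against the $(1,3,5)\leftrightarrow(3,3,3)$ example: both give the cells $\{-3,-2,-1,0,1,2\}$.
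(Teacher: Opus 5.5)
Your proposal is correct and follows the paper's proof: both invert $(c,\ell)$ to the endpoints $c-(\ell+1)/2$ and $c+(\ell-1)/2$, then check that the rainbow family and the shell family sweep the same two label intervals $\{-s-2h-1,\ldots,-s-1\}$ and $\{s,\ldots,s+2h\}$. Your explicit disjointness and $4h+2$ count (using $s\ge 0$) only spells out what the paper leaves implicit.
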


\begin{proof}
The nested cohort uses negative cells $-s-2h-1,\ldots,-s-1$ and positive cells $s,\ldots,s+2h$. A shell pair of midpoint $j$ and length $L$ joins
$j-(L+1)/2$ to $j+(L-1)/2$; as $j$ runs from $-h$ to $h$, these endpoints generate the same two intervals of cell labels.
\end{proof}

Let $\mathcal O$ be the rest of the matching. The full exact gain is
\begin{equation}
\Delta=\Delta_{\mathrm{int}}+2\sum_{q\in\mathcal O}\Delta_q,
\label{eq:exactgain}
\end{equation}
where
\begin{align}
\Delta_{\mathrm{int}}
={}&W_{LL}\sum_{j,k=-h}^{h}P_{jk}
-P_{00}\sum_{u,v=-h}^{h}W_{L+2u,L+2v},
\label{eq:intgain}
\end{align}
and
\begin{align}
\Delta_q
={}&W_{L,\ell_q}\sum_{j=-h}^{h}P_{j,c_q}
-P_{0,c_q}\sum_{u=-h}^{h}W_{L+2u,\ell_q}.
\label{eq:extgain}
\end{align}
Equations~\eqref{eq:exactgain}--\eqref{eq:extgain} are the exact exchange lemma: they do not assert a universal sign; they expose the sign and permit it to be certified at a given bandwidth.

If every exterior length differs from all cohort lengths by at least two, Eq.~\eqref{eq:offlength} makes the exterior correction exponentially small. This is precisely why the first two observed phase boundaries agree, to numerical precision, with the isolated roots of $\Delta_{\mathrm{int}}$.

For $r=0.1$ the first two roots are
\begin{align}
(1,3,5)&\longrightarrow(3,3,3):
&\omega_0^*&=0.0135070266362\ldots,\\
(7,9,11)&\longrightarrow(9,9,9):
&\omega_0^*&=0.0028239841695\ldots.
\end{align}
The existence of these sharp sign changes is the mathematical origin of the shell phases reported below.

\section{Gaussian rearrangement and the shell dynamic program}

\subsection{Why equal-length cohorts become consecutive}

The matrix $P_{cc'}$ is obtained by integrating a positively correlated bivariate Gaussian over ordered unit cells. Such Gaussian densities belong to the total-positivity framework \cite{KarlinRinott1980,RinottScarsini2006}. More directly useful here is the Brascamp--Lieb--Luttinger rearrangement inequality \cite{BLL1974}.

For a fixed length $L$, let $S\subset\Z$ be the set of midpoint indices occupied by the $L$-sector and let
\begin{equation}
U_S=\bigcup_{c\in S}J_c.
\end{equation}
Its diagonal-length contribution is
\begin{equation}
D_L(S)=W_{LL}\Prob(M_A\in U_S,M_B\in U_S).
\label{eq:DL}
\end{equation}

\begin{theorem}[Odd-cardinality Gaussian midpoint compression]
Fix $|S|=2h+1$. Among all measurable midpoint sets of total Lebesgue measure $(2h+1)a$, the probability in Eq.~\eqref{eq:DL} is maximized by the symmetric decreasing rearrangement, namely the centred interval of that measure. Among unions of half-period cells, this interval is exactly
\begin{equation}
U_h^\star=\bigcup_{c=-h}^{h}J_c
=[-(h+\tfrac12)a,(h+\tfrac12)a).
\label{eq:oddmidpoint}
\end{equation}
Thus the odd-cardinality equal-length cohort required by the shell dynamic program is optimized by $2h+1$ consecutive midpoint cells centred at zero. No even-cardinality lattice statement is required here.
\end{theorem}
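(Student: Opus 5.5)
We maximize $\Prob(M_A\in U,M_B\in U)$ over measurable $U\subset\R$ of fixed measure $(2h+1)a$. The plan is to write this probability as a three-function integral and apply the Brascamp--Lieb--Luttinger inequality \cite{BLL1974} (in its two-function Riesz form). Since $Q$ is positive definite, $M=(M_A,M_B)$ has the density
\begin{equation*}
\rho(x,y)=\frac{1}{2\pi\sqrt{\det Q}}\exp\!\left[-\tfrac12 (x,y)Q^{-1}(x,y)^T\right].
\end{equation*}
Expanding $Q^{-1}$ gives
\begin{equation*}
\rho(x,y)=C\,e^{-\alpha x^2}e^{-\beta y^2}e^{-\eta (x-y)^2},
\end{equation*}
with $\alpha,\beta\ge0$ and $\eta=\kappa/D'>0$, where $D'$ is the (positive) discriminant of $Q$. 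The sign of $\eta$ comes from $\kappa>0$, i.e.\ from the positive correlation noted before the theorem. The nonnegativity of $\alpha$ and $\beta$ should follow from $\nu_A\ge\kappa$ and $\nu_B\ge\kappa$. I expect this to be checked directly from the closed forms for $\nu_A,\nu_B,\kappa$ (it holds asymptotically, since all three equal $1/(\pi\omega_0)+O(1)$). If it fails, I would instead absorb the marginal Gaussians into the two set functions.

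Set $f=\mathbf 1_U e^{-\alpha x^2}$ and $g=\mathbf 1_U e^{-\beta y^2}$, and let $k(z)=e^{-\eta z^2}$. Then
\begin{equation*}
\Prob(M_A\in U,M_B\in U)=C\iint f(x)g(y)k(x-y)\,dx\,dy .
\end{equation*}
The kernel $k$ is already symmetric decreasing, so Riesz's inequality gives
\begin{equation*}
\iint fgk\le\iint f^*g^*k .
\end{equation*}
The main step, and the one I expect to need care, is identifying the rearrangements $f^*$ and $g^*$. The claim is that $(\mathbf 1_U e^{-\alpha x^2})^*\le \mathbf 1_{U^*}e^{-\alpha x^2}$ pointwise, where $U^*$ is the centred interval of measure $|U|$. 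The reason is that the level set $\{f>t\}=U\cap\{e^{-\alpha x^2}>t\}$ has measure at most $\min(|U|,|\{e^{-\alpha x^2}>t\}|)$. That bound is exactly the measure of $\{\mathbf 1_{U^*}e^{-\alpha x^2}>t\}$, because both $U^*$ and the Gaussian super-level sets are centred intervals and hence nested. Since $k\ge0$, monotonicity of the integral then gives
\begin{equation*}
\Prob(M\in U\times U)\le \Prob(M\in U^*\times U^*).
\end{equation*}
So the centred interval $[-(h+\tfrac12)a,(h+\tfrac12)a)$ of measure $(2h+1)a$ is a maximizer over all measurable sets of that measure.

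The statement about unions of half-period cells follows without further work. For $|S|=2h+1$, the set $U_S$ has measure $(2h+1)a$, and the centred interval is itself the admissible cell union $U_h^\star$, because the cells $J_c$ are centred at $ca$ and odd cardinality places a cell at $c=0$. Hence $U_h^\star$ attains the continuous supremum, and it is therefore optimal within the smaller class. Multiplying by the constant $W_{LL}$ in Eq.~\eqref{eq:DL} transfers the conclusion to $D_L(S)$. No even-cardinality analysis is needed, because only in the odd case does the lattice contain the exact symmetric rearrangement.
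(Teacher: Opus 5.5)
\textbf{The proposal has a genuine gap: the decomposition with $\alpha,\beta\ge0$ fails in the narrow-band regime, and the rearrangement comparison breaks there.}

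Your argument requires the density to factor as $C e^{-\alpha x^2}e^{-\beta y^2}e^{-\eta(x-y)^2}$ with $\alpha,\beta\ge0$. This fails in part of the parameter range the theorem must cover.

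Since $\tfrac12 z^TQ^{-1}z=D^{-1}(\nu_Bx^2-2\kappa xy+\nu_Ay^2)$, matching the $xy$ coefficient forces $\eta=\kappa/D$. Hence $\alpha=(\nu_B-\kappa)/D$ and $\beta=(\nu_A-\kappa)/D$. Your heuristic that $\nu_A,\nu_B,\kappa$ all equal $1/(\pi\omega_0)+O(1)$ cannot settle these signs. The leading terms cancel, so the $O(1)$ parts decide, and those parts have opposite signs.

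Indeed $(\nu_A-\kappa)(\nu_B-\kappa)=D-\kappa\delta$. In the concentration limit $D\to1+\tfrac23r^2$, which is equivalent to the paper's limit of $B_R$. Meanwhile $\kappa\delta\to(1+\tfrac{r^2}{3})^2$ by Eq.~\eqref{eq:deltaexpansion}. So the product tends to $-r^4/9<0$.

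More explicitly, $\nu_A-\nu_B=2(S_d^2-1)/[(1-\mu_+)(\mu_--1)]\to\tfrac23r^2$. This gives
\[
\nu_B-\kappa=\tfrac{\pi\omega_0}{2}\bigl(1+\tfrac{r^2}{3}\bigr)^2-\tfrac{r^2}{3}+O(\omega_0^2),
\]
which is negative once $\omega_0$ falls below roughly $2r^2/(3\pi)$. At $(r,\omega_0)=(0.1,0.0014)$, a point in the paper's phase diagram, one finds $\nu_B-\kappa\approx-1.1\times10^{-3}$.

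There $\alpha<0$, so $e^{-\alpha x^2}$ increases in $|x|$. Your key comparison $(\mathbf 1_Ue^{-\alpha x^2})^*\le\mathbf 1_{U^*}e^{-\alpha x^2}$ is then false. For example, if $U\subset[R,\infty)$ with $R>0$, then $f^*(0)$ equals the essential supremum of $f$, which exceeds $1$, while the right-hand side at the origin equals $1$. The step from Riesz's inequality to the centred interval is therefore unjustified in this regime. Your fallback of ``absorbing the marginals into the set functions'' is what you already did, and it does not address the sign problem.

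\textbf{How to repair it.} Write the form as two genuinely positive squares, for instance
\[
D^{-1}(\nu_Bx^2-2\kappa xy+\nu_Ay^2)=x^2/\nu_A+(\nu_A/D)(y-tx)^2,\qquad t=\kappa/\nu_A .
\]
This holds for every positive-definite $Q$. The paper then applies the general Brascamp--Lieb--Luttinger inequality to the four nonnegative functions $\mathbf 1_U(x)$, $\mathbf 1_U(y)$, $e^{-x^2/\nu_A}$ and $e^{-(\nu_A/D)(y-tx)^2}$ of linear forms. Only the indicators change under rearrangement, since the Gaussian factors are already symmetric decreasing.

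If you prefer to keep your three-function Riesz route, substitute $x'=tx$. The probability becomes a constant times
\[
\iint\mathbf 1_{tU}(x')\,e^{-x'^2/(t^2\nu_A)}\,\mathbf 1_U(y)\,e^{-(\nu_A/D)(y-x')^2}\,dx'\,dy .
\]
Now the marginal Gaussian decays and $(tU)^*=tU^*$, so your level-set comparison goes through unchanged.

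Your lattice step is correct as written: odd cardinality places $J_0$ in the union, so the centred interval is exactly $U_h^\star$.
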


\begin{proof}
Write the positive-definite quadratic form in the bivariate Gaussian density as a sum of two positive squares of real linear forms. The integrand for $\Prob(M_A\in U,M_B\in U)$ is then a product of the two indicator functions $1_U(M_A),1_U(M_B)$ and two centred symmetric-decreasing one-dimensional Gaussian factors evaluated on linear forms. The Brascamp--Lieb--Luttinger rearrangement inequality replaces the two indicators by their symmetric decreasing rearrangements without decreasing the integral; the Gaussian factors are unchanged because they are already symmetric decreasing. The rearranged measurable set is the centred interval of measure $(2h+1)a$, which coincides exactly with the lattice union in Eq.~\eqref{eq:oddmidpoint}.
\end{proof}

This result is important but not, by itself, a proof that every unrestricted matching is a shell partition: midpoint compression of one length class can conflict combinatorially with another length class. We therefore separate the proven reduction from the remaining global combinatorial statement.

\subsection{The centred shell grammar}

Consider a symmetric $2N$-cell window
\begin{equation}
\mathcal W_N=\{-N,-N+1,\ldots,N-1\}.
\end{equation}
The pure rainbow pairs are
\begin{equation}
(-j-1,j),
\qquad j=0,\ldots,N-1,
\end{equation}
with lengths $1,3,\ldots,2N-1$ and common midpoint $0$.

A consecutive odd block of rainbow indices $s,\ldots,s+2h$ may be shellized by the cell-preserving lemma. This produces a block
\begin{equation}
\mathcal B(s,h):
\quad L=2(s+h)+1,
\quad c=-h,\ldots,h.
\end{equation}
An even number of remaining outer radial layers may instead be paired adjacently on the two tails. The resulting grammar contains pure rainbow, equal-length shells, and adjacent tails as special cases.

At the level where distinct length sectors are deleted, the grammar is additive. Let $B_{s,h}$ denote the exact same-length self-score of the shell plus its exact interaction with the fixed exterior adjacent field. Let $T_s$ denote the analogous score when all remaining outer cells are converted to adjacent tails. Then
\begin{equation}
F(s)=\max\left\{
\max_{h\ge0}\bigl[B_{s,h}+F(s+2h+1)\bigr],
\;T_s
\right\},
\label{eq:DP}
\end{equation}
where $T_s$ is available only when $N-s$ is even. Precomputing the block scores makes the recurrence $O(N^2)$.

\begin{theorem}[Certified shell-program approximation]
Let $C^{\mathrm{DP}}_\sigma$ denote the objective retaining all within-block equal-length terms and the exact fixed-exterior field, but omitting interactions between distinct interior shell lengths. For any shell-grammar matching with at most $N$ interior transpositions,
\begin{equation}
\left|C_\Gamma(\sigma)-C^{\mathrm{DP}}_\sigma-C_{\mathrm{ext,ext}}\right|
\le4N^2e^{-\pi\gamma}.
\label{eq:DPerror}
\end{equation}
If the best DP score exceeds the runner-up by more than $8N^2e^{-\pi\gamma}$, its ordering is certified for the exact functional within the shell grammar.
\end{theorem}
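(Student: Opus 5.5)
The plan is to begin from the exact matching functional, Eq.~\eqref{eq:exactobjective}, applied to a shell-grammar matching $\sigma$, and to split the set of transpositions as $\mathcal P(\sigma)=\mathcal I\sqcup\mathcal E$. Here $\mathcal I$ collects the interior transpositions, of which there are at most $N$. The set $\mathcal E$ collects the fixed exterior adjacent-tail transpositions. Using the symmetry $P_{cc'}$, $W_{\ell m}$ under simultaneous exchange, which holds because both sums run over ordered pairs $(p,q)$, the double sum decomposes into three parts. The first is interior--interior, $\sum_{p,q\in\mathcal I}$. The second is the interior--exterior cross terms, $\sum_{p\in\mathcal I,q\in\mathcal E}+\sum_{p\in\mathcal E,q\in\mathcal I}$. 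The third is exterior--exterior, $\sum_{p,q\in\mathcal E}$, which is by definition $C_{\mathrm{ext,ext}}$.

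Next I would match these pieces against the definition of $C^{\mathrm{DP}}_\sigma$. The DP objective keeps two kinds of interior--interior terms: all terms with $p,q$ in the same block $\mathcal B(s,h)$, which share the same length $L$, and the full interior--exterior field. Two things need checking. One is that distinct shell blocks in the grammar carry distinct lengths, since $L=2(s+h)+1$ and the blocks occupy disjoint consecutive ranges of rainbow indices. The other is that, within a pure rainbow layer, each length appears at most once. It then follows that $C_\Gamma(\sigma)-C^{\mathrm{DP}}_\sigma-C_{\mathrm{ext,ext}}$ equals exactly the sum of $P_{c_pc_q}W_{\ell_p\ell_q}$ over ordered interior pairs with $\ell_p\ne\ell_q$. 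I expect the main obstacle to be the bookkeeping needed to confirm that no equal-length interior pair lying in different blocks is omitted. If two different interior blocks could share a length, such cross terms would not be exponentially small. So I would first prove from the grammar that the interior lengths are block-constant and pairwise distinct across blocks. If exterior tails have length $1$ and a rainbow layer also has length $1$, those terms sit in the exterior field, which is retained exactly, so they cause no trouble.

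With this identification, the bound follows directly from Theorem~\ref{thm:offlength}. Each omitted term satisfies $0\le P_{c_pc_q}\le1$ and $W_{\ell_p\ell_q}\le4e^{-\pi\gamma}$, and there are at most $N^2$ ordered interior pairs. This gives Eq.~\eqref{eq:DPerror}, and the error is moreover one-signed, namely nonnegative.

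For the certification claim, take two grammar matchings $\sigma_1$ (the DP winner) and $\sigma_2$ (any other). If both share the same exterior, $C_{\mathrm{ext,ext}}$ cancels. If the exterior is chosen by $T_s$, I would instead include the tail score in the DP score so that the same decomposition applies. The triangle inequality then gives
\[
C_\Gamma(\sigma_1)-C_\Gamma(\sigma_2)\ge\bigl(C^{\mathrm{DP}}_{\sigma_1}-C^{\mathrm{DP}}_{\sigma_2}\bigr)-8N^2e^{-\pi\gamma}>0,
\]
which certifies the ordering within the shell grammar. Using the one-signedness of the remainder would actually permit the sharper margin $4N^2e^{-\pi\gamma}$, but the stated $8N^2e^{-\pi\gamma}$ suffices.
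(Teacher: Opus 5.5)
Your route is the paper's own. The theorem has no separate proof in the text; it is the interior restriction of the remainder bound \eqref{eq:remainderbound} of Theorem~\ref{thm:offlength}, followed by a triangle inequality for the $8N^2e^{-\pi\gamma}$ margin. Your remark that the remainder is nonnegative, so that a margin of $4N^2e^{-\pi\gamma}$ already certifies the ordering, is correct. Two points in your bookkeeping need repair.

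The minor one: drop the appeal to symmetry of the kernel. Because $\nu_A\neq\nu_B$, neither $P_{cc'}$ nor $W_{\ell m}$ is symmetric. Splitting the ordered pairs $\mathcal P\times\mathcal P$ into interior--interior, the two interior--exterior blocks, and exterior--exterior needs no symmetry at all.

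The substantive one: the lemma you plan to prove first, that interior lengths are pairwise distinct across blocks, is false in one corner of the grammar, so that step fails as stated.
\begin{itemize}
\item The adjacent tails chosen by $T_s$ lie inside $\mathcal W_N$. They are therefore interior transpositions of length $1$, with midpoints $\pm(s+1),\pm(s+3),\ldots$.
\item If the innermost block is $\mathcal B(0,0)$, i.e.\ the rainbow pair $(-1,0)$, it also has length $1$. The additive recursion \eqref{eq:DP} then never scores the equal-length cross terms $P_{0c}W_{11}$ and $P_{c0}W_{11}$.
\item Your comment that length-$1$ terms sit in the retained exterior field covers only the fixed exterior beyond the window. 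Relabelling the $T_s$ tails as ``exterior'' does not help either, since $B_{s,h}$ sees only the fixed exterior.
\item Theorem~\ref{thm:offlength} does not control these terms, because they are equal-length.
\end{itemize}

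You have two ways to close this.
\begin{itemize}
\item Count these terms as retained in $C^{\mathrm{DP}}_\sigma$. The statement's wording, which omits only distinct-length interactions, allows this. The certification clause must then compare that corrected score rather than the raw $F(0)$.
\item Bound them separately with the midpoint-separation lemma \eqref{eq:midpointtail}. Since $|c|\ge s+1$, each term is at most $8e^{-\pi s^2/\delta}$. This is exponentially small, contrary to your worry, but at the midpoint-separation rate rather than the rate $e^{-\pi\gamma}$.
\end{itemize}

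That estimate falls below $4e^{-\pi\gamma}=4e^{-\pi D/\delta}$ once $\pi(s^2-D)/\delta\ge\ln 2$. This holds for $s\ge2$ at the paper's parameters, and $N=10$ forces $s\ge2$ because $T_s$ needs $N-s$ even. It fails for $s=1$ whenever $D>1$, as at $r=0.1$.

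The paper's statement glosses over the same case. This is a correction to the statement as much as to your argument, not a flaw peculiar to your route.
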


At $N=10$, $r=0.1$, $\omega_0=0.01$, the bound in Eq.~\eqref{eq:DPerror} is $1.48\times10^{-41}$, whereas the gap between the two leading DP partitions is $5.07\times10^{-5}$. Thus the shell-program winner is separated by more than thirty-six orders of magnitude beyond the worst-case omitted interaction bound.

\begin{algorithm}[t]
\caption{One-dimensional shell dynamic program}
\begin{algorithmic}[1]
\Require $N,\omega_0,r$, precomputed $B_{s,h}$ and $T_s$
\State $F(N)\gets0$
\For{$s=N-1,N-2,\ldots,0$}
  \State $F(s)\gets-\infty$
  \For{$h\ge0$ with $s+2h<N$}
    \State $F(s)\gets\max\{F(s),B_{s,h}+F(s+2h+1)\}$
  \EndFor
  \If{$N-s$ is even}
    \State $F(s)\gets\max\{F(s),T_s\}$
  \EndIf
\EndFor
\State \Return $F(0)$ and backtracked shell boundaries
\end{algorithmic}
\end{algorithm}

\subsection{Where centred shell compression fails}

The reduction above proves a genuine one-length-class rearrangement statement and an exponentially accurate shell-program approximation \emph{within the declared grammar}. It does not imply that every global translation optimum is a single collection of centred odd shells. A dedicated narrow-band attack shows that the stronger global shell-compression conjecture is false already in the $12$-cell problem.

\begin{proposition}[Finite counterexample to global centred shell compression]\label{prop:shellcounterexample}
Let $N=6$, so the free central window is $\{-6,-5,\ldots,5\}$, with the adjacent exterior convention specified in Sec.~\ref{subsec:windowconvention}. At $r=0.1$, complete enumeration of all $6!=720$ opposite-parity translation matchings exhibits the exact finite sequence
\begin{equation}
\begin{gathered}
\text{centred shell}\;\longrightarrow\;\text{shifted-shell doublet}\\[-1mm]
\longrightarrow\;L=3\text{ translation crystal}
\end{gathered}
\end{equation}
as $\omega_0$ decreases. In particular, an admissible translation matching outside the centred shell grammar is globally optimal on a nonempty bandwidth interval. Hence global centred shell compression is false.
\end{proposition}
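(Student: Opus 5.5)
The proof is a finite, certified computation organized by the exact translation-matching functional, Eq.~\eqref{eq:exactobjective}. Fix $N=6$ and $r=0.1$, together with the adjacent exterior convention: all cells outside $\{-6,\ldots,5\}$ are paired adjacently. Every opposite-parity matching of the $12$ central cells pairs each of the six even-labelled cells with one of the six odd-labelled cells, since odd length forces opposite parity. These matchings are therefore in bijection with the $6!=720$ permutations, and the enumeration is complete by construction. For each $\sigma$ I write
\[
C_\Gamma(\sigma)=C_{\mathrm{int}}(\sigma)+2C_{\mathrm{int,ext}}(\sigma)+C_{\mathrm{ext,ext}},
\]
where the last term is independent of $\sigma$. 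The comparison therefore only involves the finitely many interior pairs and their interaction with the fixed adjacent exterior field. The exterior sum is infinite, but it converges geometrically because $P_{cc'}$ carries Gaussian tails in $c,c'$.

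First I would identify three explicit matchings. The first is the centred shell winner of the dynamic program in Eq.~\eqref{eq:DP}. The second is a shifted-shell doublet, meaning an $L$-shell whose midpoints are not centred at $0$, paired with a second block. The third is an $L=3$ crystal, meaning a maximal set of length-$3$ pairs tiling the window with some midpoint offset. Only the crystal and the doublet need to be shown to lie outside the centred shell grammar. I would check this directly: their length-$L$ midpoint sets are not of the form $\{-h,\ldots,h\}$, and they cannot be produced by shellizing consecutive rainbow blocks.

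Next I would choose three bandwidths $\omega_1>\omega_2>\omega_3$ in the relevant range, for example around $10^{-2}$ down to a few times $10^{-3}$, using the exchange roots $\omega_0^*$ as a guide. At each $\omega_i$ I would evaluate all $720$ values with rigorous interval arithmetic. The rectangle probabilities $P_{cc'}$ come from bivariate normal CDFs with certified enclosures, and $W_{\ell m}$ is given in closed form by Eq.~\eqref{eq:Wlm}. The exterior interaction would be truncated with an explicit Gaussian tail bound. The claim is then that the argmax is, respectively, the centred shell, the doublet, and the crystal, with the winning margin exceeding the total enclosure width.

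Because each $C_\Gamma(\sigma)$ depends continuously on $\omega_0$ through $\nu_A,\nu_B,\kappa$, a strict certified margin at $\omega_2$ or $\omega_3$ persists on a neighbourhood. This gives a nonempty bandwidth interval on which a non-grammar matching is globally optimal among the $720$, and that is the counterexample.

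The main obstacle is certification where the margins are small. The off-length terms are tiny by Theorem~\ref{thm:offlength}, so the competition is decided by subtle equal-length midpoint effects, and near the exchange roots the gaps shrink. I would therefore pick the $\omega_i$ well inside each phase rather than near its boundaries, and use high-precision quadrature for $P_{cc'}$. The phrase ``exact finite sequence'' asks for the ordering of phases, not sharp boundary values, so sample points inside each phase suffice and the boundaries need not be located.
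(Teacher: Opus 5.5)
You have the same architecture as the paper: score all $6!$ opposite-parity matchings with the exact functional under the fixed adjacent exterior, exhibit the winner at one bandwidth inside each phase, and use a strict margin to obtain a nonempty interval. Your interval enclosures would make the certificate more explicit than the paper's root solves and three-point replay. The genuine gap is that this statement is entirely a computational finding, yet you never name the witnesses, and the one concrete choice you make would fail. Sampling between $10^{-2}$ and a few times $10^{-3}$, guided by the exchange roots $0.0135\ldots$ and $0.00282\ldots$, keeps you in the broad-band regime. Those roots govern the $(1,3,5)\to(3^3)$ and $(7,9,11)\to(9^3)$ transitions \emph{inside} the centred grammar. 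That is the regime where the shell program reproduces the unrestricted optimum, which is why the broad-band $M=20$ values are untouched. All three of your sample points would therefore return centred-shell winners, and no counterexample would appear. The break occurs only at narrow band. There the midpoint Gaussian is nearly flat over the twelve cells, so nothing forces a single centred component within a repeated length sector. The paper's witnesses are $\mathcal S_6=\{(-2,5),\ldots,(2,5),(0,11)\}$, $\mathcal D_6^{-}=\{(-3,5),\ldots,(1,5),(5,1)\}$ together with its mirror image $\mathcal D_6^{+}$, and $\mathcal C_6=\{(\pm2,3),(\pm3,3),(\pm4,3)\}$. Their crossings are $\omega_{S/D}\approx2.32\times10^{-4}$ and $\omega_{D/C}\approx9.56\times10^{-5}$, and the enumeration is replayed at $2.40\times10^{-4}$, $2.25\times10^{-4}$ and $8.0\times10^{-5}$, an order of magnitude or more below where you propose to look.

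A smaller correction concerns the word ``doublet''. It does not mean a shifted shell ``paired with a second block''; it is the degenerate mirror pair $\mathcal D_6^{\pm}$. The reflection $n\mapsto-n-1$ preserves the window and the exterior convention and maps $(c,\ell)\mapsto(-c,\ell)$. Since $P_{cc'}=P_{-c,-c'}$ by central symmetry of $N(0,Q)$, it follows that $C_\Gamma(\mathcal D_6^{+})=C_\Gamma(\mathcal D_6^{-})$ exactly. Your criterion of a unique argmax whose margin exceeds the enclosure width therefore cannot hold in that phase. It has to be stated for the symmetric pair against the remaining $718$ matchings. The crystal $\mathcal C_6$ is reflection-invariant and has no forced tie. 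Your continuity argument at a crystal point such as $8.0\times10^{-5}$ therefore suffices for the final ``nonempty interval'' claim, once the sampling is moved into the correct regime.
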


The three diagnostic configurations can be written in midpoint--length notation as
\begin{align}
\mathcal S_6&=\{(-2,5),(-1,5),(0,5),(1,5),(2,5),(0,11)\},\\
\mathcal D_6^-&=\{(-3,5),(-2,5),(-1,5),(0,5),(1,5),(5,1)\},\\
\mathcal C_6&=\{(-4,3),(-3,3),(-2,3),(2,3),(3,3),(4,3)\},
\end{align}
with $\mathcal D_6^+$ the reflected partner of $\mathcal D_6^-$. Root solving of the exact Gaussian score gives the crossings
\begin{align}
\omega_{S/D}&=2.320620728\times10^{-4},\\
\omega_{S/C}&=1.606117381\times10^{-4},\\
\omega_{D/C}&=9.55654372\times10^{-5}.
\end{align}
An independent replay using the same kernel and exterior convention returns $\mathcal S_6$ as the unrestricted winner at $2.40\times10^{-4}$, $\mathcal D_6^{\pm}$ at $2.25\times10^{-4}$, and $\mathcal C_6$ at $8.0\times10^{-5}$.

This counterexample identifies the actual obstruction more sharply than the earlier conjecture. Brascamp--Lieb--Luttinger compression controls a fixed length class in isolation, while global cell disjointness couples the admissible midpoint sets. Exponential separation suppresses \emph{different} translation lengths, but it does not force a single centred component inside one repeated length sector. At sufficiently narrow bandwidth, disconnected equal-length runs and periodic translation blocks can therefore beat the centred-shell grammar. The appropriate next combinatorial object is an enlarged run/block grammar, not a proof of the old conjecture.

\section{Exhaustive optimization and shell phases}

\subsection{Window convergence at the reference point}\label{subsec:windowconvention}

We now solve the exact finite-band QAP by complete enumeration through a $2N$-cell central window
\begin{equation}
\mathcal W_N=\{-N,-N+1,\ldots,N-1\}.
\end{equation}
The window contains $N$ even cells and $N$ odd cells, hence $N!$ opposite-parity perfect matchings. The phrase ``fixed adjacent exterior'' has the following exact indexing convention. On the right, the exterior pairs are
\begin{equation}
(N,N+1),(N+2,N+3),(N+4,N+5),\ldots,
\end{equation}
and on the left they are
\begin{equation}
(-N-2,-N-1),(-N-4,-N-3),(-N-6,-N-5),\ldots .
\end{equation}
Equivalently, every exterior block has translation length $1$ and midpoint index
\begin{equation}
 c=\pm(N+1+2j),\qquad j=0,1,2,\ldots .
\label{eq:exteriorindexing}
\end{equation}
The exterior is extended far enough into the Gaussian tails that the residual probability is below the stated numerical precision.

Crucially, the values called $C_\Gamma^{\max}$ in Table~\ref{tab:windows} are scores of the \emph{completed matching}, not core-only scores. Writing $I$ for the optimized central blocks and $E_N$ for Eq.~\eqref{eq:exteriorindexing},
\begin{equation}
C_\Gamma=C_{II}+C_{IE}+C_{EI}+C_{EE}.
\label{eq:windowdecomp}
\end{equation}
At $(r,\omega_0)=(0.1,0.01)$ the maximizing core-only scores for $M=4,8,12$ are $0.5892915669$, $0.8768167334$, and $0.9376586273$, while the corresponding exterior--exterior constants are $0.3326231388$, $0.0627053462$, and $0.0057650775$. The core--exterior cross term is negligible at the displayed precision at this reference point. Adding the exterior constant gives exactly the first three rows of Table~\ref{tab:windows}. This decomposition is included to make independent reproduction unambiguous.

At
\begin{equation}
r=0.1,
\qquad
\omega_0=0.01,
\end{equation}
complete enumeration gives Table~\ref{tab:windows}.

\begin{table}[t]
\centering
\caption{Exact exhaustive central-window optimization at $r=0.1$, $\omega_0=0.01$.}
\label{tab:windows}
\begin{tabular}{rrrr}
\toprule
$M$ & matchings & $C_\Gamma^{\max}$ & $B=\sqrt2(C_X+C_\Gamma)$\\
\midrule
4  & $2$ & 0.921914706 & 1.993545146\\
8  & $24$ & 0.939522080 & 2.018445733\\
12 & $720$ & 0.943423705 & 2.023963465\\
16 & $40{,}320$ & 0.944141803 & 2.024979008\\
20 & $3{,}628{,}800$ & 0.944192454 & 2.025050640\\
\bottomrule
\end{tabular}
\end{table}

The 20-cell optimizer has midpoint/length features
\begin{equation}
(-1,3),(0,3),(1,3),
\label{eq:refcore-shell}
\end{equation}
plus singleton rainbow sectors
\begin{equation}
(0,7),(0,9),(0,11),(0,13),(0,15),(0,17),(0,19).
\label{eq:refcore-rainbows}
\end{equation}
Thus the first three rainbow lengths $(1,3,5)$ have shellized to three copies of length $3$, while all outer lengths remain nested.

Continuing this analytic matching rule to wider windows gives
\begin{equation}
C_\Gamma=0.9441939384\ldots
\end{equation}
by $M=32$--$40$. These wider values are evaluations of the stabilized matching rule, not exhaustive $12!$--$20!$ global enumerations.

\begin{figure}[t]
\centering
\begin{tikzpicture}
\begin{axis}[
width=\columnwidth,height=5.1cm,
xlabel={central window $M$},ylabel={$C_\Gamma^{\max}$},
xmin=3,xmax=21,ymin=.915,ymax=.947,
grid=major,xtick={4,8,12,16,20},tick label style={font=\scriptsize}]
\addplot[thick,mark=*] coordinates {(4,.921914706)(8,.939522080)(12,.943423705)(16,.944141803)(20,.944192454)};
\addplot[dashed] coordinates {(4,.9441939384)(20,.9441939384)};
\end{axis}
\end{tikzpicture}
\caption{Rapid convergence of the exact central-window optimization at $r=0.1$, $\omega_0=0.01$. The dashed line is the stabilized wider-window matching-rule value.}
\end{figure}

\subsection{Reflection-inclusive attack and an exact infinite-line branch}\label{subsec:reflectionattack}

The global fibre classification contains a second primitive branch that was not included in the original translation-cell optimizer. It can be analyzed exactly. For $k\in\Z$ define
\begin{equation}
R_k(q)=-q+(2k+1)a.
\end{equation}
Each $R_k$ is a measure-preserving involution satisfying the cosine-fibre condition and is wedge-local by Theorem~\ref{thm:wedgelocal}.

\begin{theorem}[Exact global-reflection correlation]\label{thm:globalreflection}
For the affine reflection $R_k$, the two-wedge Clifford correlation is
\begin{equation}
\boxed{
C_{R_k}=\frac1D
\exp\!\left[-\frac{\pi\delta}{4D}(2k+1)^2\right],}
\qquad D=\nu_A\nu_B-\kappa^2.
\label{eq:globalreflection}
\end{equation}
Thus $k=0$ and $k=-1$ maximize the correlation within the global affine-reflection family.
\end{theorem}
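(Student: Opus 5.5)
The plan is to compute the correlation $C_{R_k}=\langle\Omega,\Gamma_A\Gamma_B\Omega\rangle$ directly in the Schr\"odinger picture of Theorem~\ref{thm:wedgelocal}. The idea is to recognise $C_{R_k}$ as a displaced phase-space parity and evaluate it with the Gaussian Wigner function of the vacuum restricted to the two CCR planes. Here $\Gamma_A$ and $\Gamma_B$ are the lifts of $C_{R_k}$ in the right and left wedges, built from the local pairs $(f,f')$ and $(g,g')$.

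\textbf{Step 1: the reduced state.} First I identify the reduced state. With $q_A=\varphi(f)/\sqrt2$ and $q_B=\varphi(g)/\sqrt2$, the vacuum gives $\langle W(h)\Omega,\Omega\rangle=e^{-\|h\|^2/2}$. The data $\|f\|^2=\nu_A$, $\|g\|^2=\nu_B$, $\langle f|g\rangle=\kappa$, $\langle f'|g'\rangle=-\kappa$, together with the vanishing mixed products, then make the joint state of $(q_A,q_B,p_A,p_B)$ a centred Gaussian. Its covariance is $\Sigma=Q\oplus P$, with no $q$--$p$ correlations. This is the same Gaussian already used for the translation functional. It follows that $\det Q=\det P=D/4$.

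\textbf{Step 2: $C_{R_k}$ as a displaced parity.} Next I observe that $(C_{R_k}\psi)(q)=\psi(c-q)$ with $c=(2k+1)a$. This is exactly the parity operator reflected through the phase-space point $z_0=(c/2,0)$. In the standard convention with $\hbar=1$ and $\int W\,dq\,dp=1$, one has $\operatorname{Tr}(\rho\,\Pi_{z_0})=\pi W_\rho(z_0)$ for one mode. The product of two such local parities is the two-mode displaced parity, so
\[
C_{R_k}=\pi^2 W(c/2,c/2;0,0).
\]

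\textbf{Step 3: evaluating the Wigner function.} The Gaussian Wigner function gives
\[
W=\frac{\exp(-\tfrac12 x^T\Sigma^{-1}x)}{(2\pi)^2\sqrt{\det\Sigma}},\qquad \sqrt{\det\Sigma}=D/4,
\]
so the prefactor is $\pi^2/(\pi^2 D)=1/D$. For the exponent I use $Q^{-1}=(2/D)\bigl(\begin{smallmatrix}\nu_B&-\kappa\\-\kappa&\nu_A\end{smallmatrix}\bigr)$, which gives $(1,1)Q^{-1}(1,1)^T=2\delta/D$. With $q_A=q_B=c/2$ the exponent is therefore $-\tfrac12\cdot\tfrac{c^2}{4}\cdot\tfrac{2\delta}{D}=-\tfrac{\pi\delta}{4D}(2k+1)^2$, which is Eq.~\eqref{eq:globalreflection}. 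The maximisation claim then follows at once, since $\delta>0$ and $(2k+1)^2$ is minimised exactly at $k\in\{0,-1\}$.

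\textbf{Main obstacle.} The hard part is justifying Step 2 with the correct normalisation and conventions. I must confirm that the lifted operators, acting in $L^2(\R)\otimes$multiplicity for each wedge, produce the two-mode expectation $\pi^2 W$. This needs a joint Stone--von Neumann identification of the four-dimensional CCR system $\operatorname{span}\{f,f',g,g'\}$, which is nondegenerate because mixed symplectic products vanish. It also needs a check that the sign of $\langle f'|g'\rangle$ enters only $P$, and hence drops out at $p=0$.

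As a cross-check that avoids the Wigner formalism, I would compute the correlation by a kernel integral instead. Write $\langle\psi, C_R\psi\rangle=\int\psi^*(q)\psi(c-q)\,dq$ for the two-mode Gaussian vacuum kernel, change to midpoint/difference variables as in the proof of Eq.~\eqref{eq:exactobjective}, and perform the resulting Gaussian integrals. This reproduces the $1/D$ prefactor as the ratio of determinants.
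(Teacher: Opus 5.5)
Your proof is correct, and it takes a genuinely different route from the paper's.

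The paper's proof works with the position-space kernel. In midpoint--displacement variables the two-mode vacuum kernel factorizes as $\phi_Q(m)\,e^{-d^TPd/2}$. For $R_k$ the midpoint is pinned at $m_A=m_B=(2k+1)a/2$, while the displacement runs over all of $\R^2$ with Jacobian $1/4$. The $1/D$ prefactor is then the product of the midpoint normalization $(2\pi\sqrt{\det Q})^{-1}$ and the displacement integral $2\pi/(4\sqrt{\det P})$, and the exponent is $\phi_Q$ evaluated at the pinned midpoint.

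You instead recognize $\Gamma_A\Gamma_B$ as a two-mode displaced parity. You then read off $\pi^2W(c/2,c/2;0,0)$ from the Gaussian Wigner function with covariance $Q\oplus P$. The two computations are Fourier duals: the paper's kernel factorization is exactly the partial Fourier transform in $p$ of your product Wigner function $\phi_Q(m)\phi_P(p)$. The kernel integral you propose as a cross-check is essentially the paper's proof.

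Each route has its own advantage.
\begin{itemize}
\item \emph{Your route} makes it transparent that $P$ enters only through $\sqrt{\det P}$, because the evaluation is at $p=0$. This also shows immediately why the sign of $\langle f'|g'\rangle$ is irrelevant. It further identifies the global-reflection branch with the displaced-parity (Banaszek--W\'odkiewicz type) observables the paper cites as prior art, which is a worthwhile conceptual link.
\item \emph{The paper's route} is uniform with the rest of its machinery. The same midpoint measure $d\mu$ and displacement measure $d\eta$ are reused for the translation functional and for the mixed TT/TR/RR kernel in Eq.~\eqref{eq:mixedkernel}. That shared setup underlies the $12$-cell mixed enumeration and its $4.4\times10^{-14}$ normalization check against Eq.~\eqref{eq:globalreflection}.
\end{itemize}

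The joint identification you flag as the main obstacle is implicit in both proofs, and your resolution is the right one. The span of $\{f,f',g,g'\}$ carries a block-diagonal, nondegenerate symplectic form. Hence the vacuum restricts to a normal Gaussian state on $B(L^2(\R^2))\otimes I$. Each wedge lift of $C_{R_k}$ is the image of the same Schr\"odinger operator under that identification.
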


\begin{proof}
For $R_k$, the midpoint $m=(q+R_kq)/2$ is fixed at $(2k+1)a/2$, whereas the displacement $d=R_kq-q$ ranges over the full real line with Jacobian $dq=\tfrac12|dd|$. In midpoint--displacement variables the Gaussian kernel factorizes into the centred midpoint density with covariance $Q$ and the displacement quadratic form $P$. Since $\det Q=\det P=D/4$, the two normalization factors multiply to $1/D$. Evaluating the midpoint exponent gives $\pi\delta(2k+1)^2/(4D)$, proving Eq.~\eqref{eq:globalreflection}.
\end{proof}

At the reference point $(r,\omega_0)=(0.1,0.01)$,
\begin{align}
C_R&=0.96917057543459\ldots,\\
B_R&=\sqrt2(C_X+C_R)=2.0603750381652\ldots .
\end{align}
The exact reflection Bell threshold at $r=0.1$ is
\begin{equation}
\boxed{\omega_{0,R}=0.02203669210198\ldots,}
\end{equation}
which is substantially broader than both the adjacent and translation-shell thresholds quoted below. In the concentration limit,
\begin{equation}
B_R\longrightarrow\sqrt2\left(\frac12+\frac{1}{1+(2/3)r^2}\right).
\end{equation}
Here the denominator is $1+(2/3)r^2$; the power is $r^2$, not $r^{2/3}$.
Hence the global-reflection branch remains asymptotically Bell-violating precisely for
\begin{equation}
r<r_R^*=0.37517260299\ldots .
\end{equation}
This gives an analytic answer to the dependence on the packet ratio $r$ rather than a single reference-point number.

\paragraph{Reflection-inclusive finite-cell kernel.}
The cell grammar can also mix the two primitive branches. A pair $I_i,I_j$ with $j-i$ odd uses the translation $q\mapsto q+(j-i)a$, whereas a same-parity pair, including $i=j$, uses the reflection $q\mapsto-q+(i+j+1)a$. In midpoint--displacement variables let a translation block $\tau=(c,\ell)$ have
\begin{equation}
d\mu_\tau=\mathbf1_{J_c}(m)\,dm,\qquad
d\eta_\tau=\delta_{\ell a}+\delta_{-\ell a},
\end{equation}
and a reflection block $\rho=(h,e)$, $h\in\Z+\tfrac12$ and $e\ge0$ even, have
\begin{equation}
d\mu_\rho=\delta_{ha}(dm),\qquad
d\eta_\rho=\tfrac12\mathbf1_{\mathcal D_e}(d)\,dd,
\end{equation}
where $\mathcal D_0=[-a,a]$ and, for $e\ge2$,
\begin{equation}
\mathcal D_e=[(e-1)a,(e+1)a]\cup[-(e+1)a,-(e-1)a].
\end{equation}
Every TT, TR, RT and RR pair is then covered by the single exact kernel
\begin{equation}
\boxed{K_{\alpha\beta}=M_{\alpha\beta}D_{\alpha\beta},}
\label{eq:mixedkernel}
\end{equation}
with
\begin{align}
M_{\alpha\beta}&=
\iint\phi_Q(m_A,m_B)\,d\mu_\alpha d\mu_\beta,\\
D_{\alpha\beta}&=
\iint e^{-d^TPd/2}\,d\eta_\alpha d\eta_\beta.
\end{align}
For TT this reduces identically to $P_{cd}W_{\ell m}$. As an independent normalization test, summing reflection blocks for $R_0$ reproduces Eq.~\eqref{eq:globalreflection} to $4.4\times10^{-14}$ at the reference point.

A complete enumeration of the $12$-cell mixed problem contains
\begin{equation}
I_{12}=140{,}152
\end{equation}
cellwise involutions. With the same adjacent exterior convention as the translation audit, exhaustive scoring gives
\begin{equation}
\boxed{C_{\Gamma,12}^{\rm mix}=0.96926709992377\ldots,}
\end{equation}
and therefore, retaining the original cosine axis,
\begin{equation}
\boxed{B_{12}^{\rm mix}=2.0605115444068\ldots .}
\label{eq:mixed12Bell}
\end{equation}
The winner is reflection-dominated. This is a finite-window optimum over the declared cellwise mixed grammar, not a theorem of global optimality over arbitrary measurable fibre involutions.

\begin{table}[t]
\centering
\caption{Reflection-inclusive $12$-cell scan at $\omega_0=0.01$. Every row exhausts all $140{,}152$ cellwise involutions with a fixed adjacent exterior.}
\label{tab:rscan}
\begin{tabular}{ccc}
\toprule
$r$ & $C_\Gamma^{\rm mix}$ & $B_{12}^{\rm mix}$\\
\midrule
$0.05$ & $0.9735640624$ & $2.0666736312$\\
$0.10$ & $0.9692670999$ & $2.0605115444$\\
$0.20$ & $0.9536424578$ & $2.0380718886$\\
$0.50$ & $0.9123932218$ & $1.9772451052$\\
$0.80$ & $0.8983234610$ & $1.9523070586$\\
\bottomrule
\end{tabular}
\end{table}

\subsection{Discrete shell transitions}

The 20-cell optimum changes combinatorially as $\omega_0$ is varied. At $r=0.1$ the first phases are:
\begingroup\small
\begin{align}
\omega_0>0.0135070:&\quad
1|3|5|7|9|11|13|15|17|19,\\
0.00282398<\omega_0<0.0135070:&\quad
(3^3)|7|9|11|13|15|17|19,\\
0.00195\lesssim\omega_0<0.00282398:&\quad
(3^3)|(9^3)|13|15|17|19,
\end{align}
\endgroup
where $L^k$ denotes an equal-length shell containing $k$ midpoint cells and vertical bars separate length sectors.

At smaller bandwidth the outer region begins to peel into adjacent tails. For example, near $\omega_0=0.0018$ the exact optimum is
\begin{equation}
(3^3)|(9^3)|1_{\rm tails}^{4},
\end{equation}
while near $\omega_0=0.0014$ it reorganizes to
\begin{equation}
(5^5)|(13^3)|1_{\rm tails}^{2}.
\end{equation}
The latter has midpoint features
\begin{align}
\ell=5:&\quad c=-2,-1,0,1,2,\\
\ell=13:&\quad c=-1,0,1,\\
\ell=1:&\quad c=-9,9.
\end{align}

\begin{figure*}[t]
\centering
\begin{tikzpicture}[x=.42cm,y=.55cm,>=Latex]
\newcommand{\cellline}[2]{
  \draw[->] (-10.7,#1)--(10.7,#1);
  \foreach \x in {-10,-9,...,9}{\draw (\x,#1-.08)--(\x,#1+.08);}
  \node[anchor=east] at (-10.8,#1) {#2};
}
\cellline{5.3}{$\omega_0=.014$}
\foreach \a/\b in {-10/9,-8/7,-6/5,-4/3,-2/1,0/-1,2/-3,4/-5,6/-7,8/-9}{\draw[thick] (\a,5.3) to[bend left=24] (\b,5.3);}
\cellline{3.5}{$\omega_0=.010$}
\foreach \a/\b in {-10/9,-8/7,-6/5,-4/3,-2/1,0/-3,2/-1,4/-5,6/-7,8/-9}{\draw[thick] (\a,3.5) to[bend left=24] (\b,3.5);}
\cellline{1.7}{$\omega_0=.0028$}
\foreach \a/\b in {-10/9,-8/7,-6/3,-4/5,-2/1,0/-3,2/-1,4/-5,6/-7,8/-9}{\draw[thick] (\a,1.7) to[bend left=24] (\b,1.7);}
\cellline{-0.1}{$\omega_0=.0014$}
\foreach \a/\b in {-10/-9,-8/5,-6/7,-4/1,-2/3,0/-5,2/-3,4/-1,6/-7,8/9}{\draw[thick] (\a,-.1) to[bend left=24] (\b,-.1);}
\end{tikzpicture}
\caption{Representative exact 20-cell matching phases. Broad bandwidth favors the pure rainbow. Narrowing first equalizes $(1,3,5)$ into a length-3 shell, then $(7,9,11)$ into a length-9 shell. At still smaller bandwidth, long equal-length shells coexist with adjacent tails. Pairing arcs are schematic; every edge joins opposite-parity cells and defines an exact involution.}
\label{fig:phases}
\end{figure*}

\paragraph{The reference value is not the translation-branch ceiling.}
Completing the reported $20$-cell phase winners with the same infinite adjacent exterior and evaluating the exact Gaussian functional gives the following representative values at $r=0.1$:
\begin{center}
\begin{tabular}{ccc}
\toprule
$\omega_0$ & central translation pattern & $B$\\
\midrule
$0.0100$ & $(3^3)|7|9|11|13|15|17|19$ & $2.02505064$\\
$0.0028$ & $(3^3)|(9^3)|13|15|17|19$ & $2.07972991$\\
$0.0021$ & $(3^3)|(9^3)|13|15|17|19$ & $2.08739762$\\
$0.0014$ & $(5^5)|(13^3)|1_{\rm tails}^2$ & $2.09537733$\\
\bottomrule
\end{tabular}
\end{center}
Thus the modest excess at $\omega_0=0.01$ is a bandwidth-dependent reference value even before reflections are admitted.

\subsection{Why the shell length scales as \texorpdfstring{$\omega_0^{-1/2}$}{omega0 inverse square root}}

For equal lengths,
\begin{equation}
W_{\ell\ell}
=2e^{-\frac\pi4\delta\ell^2}
+2e^{-\frac\pi4(\nu_A+\nu_B+2\kappa)\ell^2}.
\end{equation}
The second term is negligible in the concentration regime. Expanding the exact modular formulas gives
\begin{equation}
\delta
=\pi\left(1+\frac{r^2}{3}\right)^2\omega_0
+O(\omega_0^2).
\label{eq:deltaexpansion}
\end{equation}
Thus the equal-length weight remains appreciable while
\begin{equation}
\ell=O\!\left(\frac{1}{(1+r^2/3)\sqrt{\omega_0}}\right).
\end{equation}
A natural shell scale is therefore
\begin{equation}
\boxed{
L_c(\omega_0,r)
\asymp
\frac{2}{\pi(1+r^2/3)\sqrt{\omega_0}}.}
\label{eq:shellscale}
\end{equation}
At the same time distinct lengths differ by at least two and are suppressed at the much faster scale $e^{-\pi\gamma}\sim e^{-1/\omega_0}$. This separation of scales explains why the optimizer prefers equal-length cohorts whose lengths grow only as $\omega_0^{-1/2}$.

\subsection{Alternating-cycle cohort reachability}

The shell phases describe the optimum, but not how a general matching reaches it. Let $t$ be the unique finite-window optimum and let $p$ be any other opposite-parity matching. The relative permutation $t^{-1}p$ decomposes into disjoint nontrivial alternating cycles $\mathcal C_1,\ldots,\mathcal C_m$. Correcting a subset $J\subset\{1,\ldots,m\}$ means replacing every $p$-edge belonging to cycles in $J$ by the corresponding $t$-edge. Define
\begin{equation}
\Delta_p(J)=C_\Gamma(E_Jp)-C_\Gamma(p).
\end{equation}
Because Eq.~\eqref{eq:exactobjective} is quadratic in the edge-indicator vector, the subset gain has no irreducible interaction beyond pairs.

\begin{theorem}[Exact alternating-cycle quadratic identity]
For every $p$ and target $t$,
\begin{equation}
\boxed{\Delta_p(J)=\sum_{i\in J}a_i+
\sum_{\substack{i<j\\i,j\in J}}b_{ij}.}
\label{eq:cyclequadratic}
\end{equation}
If $d_i$ is the signed edge-defect vector of cycle $i$ and $K_{ef}=P_{c_ec_f}W_{\ell_e\ell_f}$ is the ordered edge kernel, then
\begin{equation}
\boxed{b_{ij}=d_i^T(K+K^T)d_j.}
\label{eq:bijbilinear}
\end{equation}
\end{theorem}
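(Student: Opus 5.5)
We encode a translation matching by its edge-indicator vector $x\in\{0,1\}^{\mathcal E}$, where $\mathcal E$ is the set of ordered admissible edges (signed midpoint--length pairs $(c,\ell)$ with an orientation). Equation~\eqref{eq:exactobjective} then reads $C_\Gamma(x)=x^TKx$ with $K_{ef}=P_{c_ec_f}W_{\ell_e\ell_f}$. If one prefers unordered transpositions, the orientation sum is already absorbed in $W$, and we simply index $K$ by unordered pairs. The essential point is that $C_\Gamma$ is a homogeneous quadratic form in $x$. This holds because both $P_{cc'}$ and $W_{\ell m}$ depend only on the pair of edges, and not on the rest of the matching.

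Next we describe cycle correction linearly. Let $t$ and $p$ be the target and current matchings, and decompose $t^{-1}p$ into disjoint alternating cycles $\mathcal C_1,\dots,\mathcal C_m$. For cycle $i$ we define $d_i=\mathbf 1_{t\text{-edges of }\mathcal C_i}-\mathbf 1_{p\text{-edges of }\mathcal C_i}$. The cycles are vertex-disjoint, so their edge sets are disjoint. Swapping the edges of one cycle leaves a valid perfect matching, and the edges outside $\mathcal C_i$ are untouched. Hence for every $J$,
\[
x(E_Jp)=x(p)+\sum_{i\in J}d_i .
\]
We should check that this is really an identity of indicator vectors. On $\mathcal C_i$ the $p$-edges are present in $x(p)$ and the $t$-edges are absent, since otherwise they would be common edges and not part of a nontrivial cycle. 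So adding $d_i$ yields a $0/1$ vector, and the supports of different $d_i$ do not overlap.

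Finally we expand the quadratic. Writing $x=x(p)$, we get
\[
\Delta_p(J)=\Bigl(x+\sum_{J}d_i\Bigr)^TK\Bigl(x+\sum_Jd_i\Bigr)-x^TKx
=\sum_{i\in J}\bigl[x^T(K+K^T)d_i+d_i^TKd_i\bigr]+\sum_{i\ne j\in J}d_i^TKd_j .
\]
Setting $a_i=x^T(K+K^T)d_i+d_i^TKd_i$, which is just $\Delta_p(\{i\})$, and grouping the ordered pairs $i<j$ gives $b_{ij}=d_i^TKd_j+d_j^TKd_i=d_i^T(K+K^T)d_j$. This proves Eqs.~\eqref{eq:cyclequadratic}--\eqref{eq:bijbilinear}. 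No cubic or higher terms can appear, because the objective has degree two.

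The only delicate step is the infinite or probability-truncated case. There $x$ has infinite support, so $x^TKx$ needs to be justified. We would handle this in one of two ways. Either we work within a finite window with a fixed exterior, where the exterior edges are common to $t$ and $p$ and only contribute to $a_i$ through the cross term $x^T(K+K^T)d_i$. Or we note that $K\ge0$ entrywise and that $\sum_{e}K_{ef}x_e$ converges, since the $P$ row sums are at most $1$ and the $W$ entries are bounded. Each $d_i$ is finitely supported, so every term in the expansion is absolutely convergent and the rearrangements above are legitimate.
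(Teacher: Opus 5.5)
Your proof is correct and follows the paper's own route. The paper justifies the identity only by the remark that Eq.~\eqref{eq:exactobjective} is quadratic in the edge-indicator vector, and you supply the omitted details: the disjointness of the supports of the $d_i$, the identity $x(E_Jp)=x(p)+\sum_{i\in J}d_i$, and the expansion giving $a_i=\Delta_p(\{i\})$ and $b_{ij}=d_i^T(K+K^T)d_j$.

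One correction concerns your second infinite-support justification. Bounded $W$ together with $P$ row sums $\le 1$ does not control $\sum_e K_{ef}x_e$, because a matching such as the rainbow places infinitely many edges at a single midpoint. You need the decay $W_{\ell m}\le 4e^{-\pi\gamma(\ell-m)^2/4}$ together with distinctness of the pairs $(c_e,\ell_e)$. Alternatively, your first option (a fixed exterior common to $t$ and $p$) suffices on its own, and it is the paper's actual setting.
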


Let $G_+(p;t)$ be the graph on the nontrivial relative cycles with $i\sim j$ when $b_{ij}>0$, and let $\kappa_+(p;t)$ be its largest connected-component size. Define $K_t(p)$ as the smallest number of target-relative cycles that must be corrected simultaneously to obtain a strict increase in $C_\Gamma$.

\begin{proposition}[Positive-synergy component bound]
If $C_\Gamma(t)>C_\Gamma(p)$, then
\begin{equation}
\boxed{K_t(p)\le \kappa_+(p;t).}
\label{eq:positivegraphbound}
\end{equation}
\end{proposition}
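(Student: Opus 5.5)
The argument uses only the exact alternating-cycle quadratic identity, Eq.~\eqref{eq:cyclequadratic}, together with a decomposition of the positive-synergy graph into components. Correcting all cycles reaches the target, $E_{\{1,\ldots,m\}}p=t$. Hence
\begin{equation*}
C_\Gamma(t)-C_\Gamma(p)=\Delta_p(\{1,\ldots,m\})=\sum_{i}a_i+\sum_{i<j}b_{ij}>0 .
\end{equation*}
Let $V_1,\ldots,V_r$ be the vertex sets of the connected components of $G_+(p;t)$. Each pair $i<j$ either lies inside a single $V_s$ or joins two distinct components. In the second case $i\not\sim j$, so by definition of the graph $b_{ij}\le0$. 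Dropping these cross terms can only increase the sum, which gives
\begin{equation*}
0<\Delta_p(\{1,\ldots,m\})\le\sum_{s=1}^r\Bigl(\sum_{i\in V_s}a_i+\sum_{\substack{i<j\\ i,j\in V_s}}b_{ij}\Bigr)=\sum_{s=1}^r\Delta_p(V_s).
\end{equation*}
Here the last equality applies Eq.~\eqref{eq:cyclequadratic} again, now to each subset $V_s$ separately.

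A finite sum of reals that is strictly positive must contain at least one strictly positive term. So there is a component $V_s$ with $\Delta_p(V_s)>0$. Correcting exactly the cycles in $V_s$ therefore gives a strict increase of $C_\Gamma$. The set $V_s$ uses $|V_s|\le\kappa_+(p;t)$ cycles, and $K_t(p)$ is by definition the minimal number of cycles whose simultaneous correction strictly increases the objective. Hence $K_t(p)\le|V_s|\le\kappa_+(p;t)$, which is Eq.~\eqref{eq:positivegraphbound}.

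Two points need checking. The first is that $E_J$ is well defined for every subset $J$. The relative cycles are vertex-disjoint, so replacing the $p$-edges of any collection of them by the corresponding $t$-edges again yields an opposite-parity perfect matching. Thus every $E_Jp$ is admissible, and the identity can be applied to each $V_s$ as well as to the full set. The second is that $G_+$ is defined by the sign of $b_{ij}$ alone, so no analytic input about $P$ or $W$ is needed beyond Eq.~\eqref{eq:cyclequadratic}.

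The main obstacle is the hypothesis behind Eq.~\eqref{eq:cyclequadratic}. That identity requires the pairwise coefficients $b_{ij}$ to be independent of which other cycles are corrected. If the exterior field is treated as fixed, its contribution enters only through linear terms, which are absorbed into the $a_i$; the decomposition above is then unaffected.
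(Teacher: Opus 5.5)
Your proposal is correct and is essentially the paper's own argument. The full correction gain $C_\Gamma(t)-C_\Gamma(p)$ splits via Eq.~\eqref{eq:cyclequadratic} into the component gains plus cross-component pair terms, which are nonpositive by the definition of $G_+$, so some component must itself be an improving cohort; the paper phrases this by contradiction, while you argue directly and also check explicitly that every $E_Jp$ is an admissible opposite-parity matching, a point the paper leaves implicit.
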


\begin{proof}
Between distinct connected components of $G_+$ every $b_{ij}$ is nonpositive. If every component had nonpositive exact correction gain, summing the component gains and then adding the nonpositive cross-component pair terms would give $C_\Gamma(t)-C_\Gamma(p)\le0$, a contradiction. Hence at least one positive-synergy component is itself an improving cohort.
\end{proof}

The full $10!$ audit shows that the required cohort order is bandwidth dependent even when the optimal matching does not change. At $r=0.1$, every non-optimal matching has an improving one-cycle correction at $\omega_0=0.014$ and $0.01$. At $0.0028$, exactly 84 matchings require two cycles. At $0.0021$, while the optimum is still the same $(3^3)|(9^3)|13|15|17|19$ matching, one state requires three cycles. For that witness all singleton and two-cycle corrections are negative, while the complete three-cycle correction gives
\begin{equation}
\Delta(1,2,3)=0.00335722112737\ldots>0.
\end{equation}
The decisive singleton changes sign at
\begin{equation}
\boxed{\omega_{0,K}=0.00214559000369\ldots,}
\label{eq:reachabilityroot}
\end{equation}
which is a reachability-order transition rather than a matching-structure transition.

\begin{table}[t]
\centering
\caption{Exhaustive $M=20$, $r=0.1$ target-relative cohort-order audit.}
\label{tab:cohortorder}
\begin{tabular}{rcc}
\toprule
$\omega_0$ & $K_{\max}$ & states with $K_t>1$\\
\midrule
0.0140 & 1 & 0\\
0.0100 & 1 & 0\\
0.0028 & 2 & 84\\
0.0021 & 3 & 116\\
0.0018 & 3 & 214\\
0.0014 & 3 & 1{,}196\\
0.0010 & 4 & 2{,}355\\
\bottomrule
\end{tabular}
\end{table}

\subsection{Shell-defect localization}

For a cycle $i$, group its signed defects by midpoint and translation length,
\begin{equation}
\delta_i^{(\ell)}(c)=
\sum_{\substack{e:\,c_e=c\\\ell_e=\ell}}(d_i)_e,
\qquad
\mathcal D_i=\sum_{\ell,c}|\delta_i^{(\ell)}(c)|.
\end{equation}
With $\Xi_{\ell m}=W_{\ell m}+W_{m\ell}$, Eq.~\eqref{eq:bijbilinear} becomes
\begin{equation}
\boxed{b_{ij}=\sum_{\ell,m}\Xi_{\ell m}
\sum_{c,d}\delta_i^{(\ell)}(c)P_{cd}\delta_j^{(m)}(d).}
\label{eq:shellprofilebij}
\end{equation}
Length mismatch is controlled by Theorem~\ref{thm:offlength}; midpoint mismatch is controlled by the variance of $M_A-M_B$, which is $\delta/2$.

\begin{lemma}[Midpoint-separation bound]
For midpoint cells $J_c,J_d$,
\begin{equation}
P_{cd}\le\min\left\{1,
2\exp\!\left[-\frac{\pi}{\delta}(|c-d|-1)_+^2\right]\right\}.
\label{eq:midpointtail}
\end{equation}
\end{lemma}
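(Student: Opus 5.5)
The plan is to reduce the two-dimensional rectangle probability to a one-dimensional Gaussian tail for the difference variable $M_A-M_B$, whose variance is already identified just above the statement as $\delta/2$.

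First, the trivial bound $P_{cd}\le1$ holds because $P_{cd}$ is a probability. When $|c-d|\le1$ the exponent $(|c-d|-1)_+^2$ vanishes, so the second entry of the minimum equals $2$ and the statement reduces to this trivial bound. We may therefore assume $|c-d|\ge2$ and, by the symmetry $c\leftrightarrow d$ (equivalently, exchanging the roles of $M_A$ and $M_B$), take $c>d$.

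Second, we use the geometry of the midpoint cells $J_c=[(c-\tfrac12)a,(c+\tfrac12)a)$. On the event $\{M_A\in J_c,\,M_B\in J_d\}$ we have $M_A\ge(c-\tfrac12)a$ and $M_B<(d+\tfrac12)a$. Hence
\[
M_A-M_B>(c-d-1)a=(|c-d|-1)a=:t,
\]
and so $P_{cd}\le\Prob(M_A-M_B\ge t)\le\Prob(|M_A-M_B|\ge t)$.

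Third, we compute the law of the difference. Since $M\sim N(0,Q)$ with $Q=\tfrac12\bigl(\begin{smallmatrix}\nu_A&\kappa\\\kappa&\nu_B\end{smallmatrix}\bigr)$, the variable $Z=M_A-M_B=(1,-1)M$ is centred Gaussian with variance
\[
\sigma^2=(1,-1)Q(1,-1)^T=\tfrac12(\nu_A+\nu_B-2\kappa)=\delta/2>0.
\]
The standard Chernoff bound $\Prob(Z\ge t)\le e^{-t^2/(2\sigma^2)}$, doubled for the two-sided event, gives
\[
\Prob(|Z|\ge t)\le2e^{-t^2/\delta}.
\]
Substituting $t^2=\pi(|c-d|-1)^2$, using $a^2=\pi$, yields exactly $2\exp[-\pi(|c-d|-1)_+^2/\delta]$. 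Taking the minimum with $1$ gives Eq.~\eqref{eq:midpointtail}.

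There is no serious obstacle here. The only points needing care are the half-open cell convention, which gives the strict gap $(|c-d|-1)a$ rather than a shifted one, and the constant in the Gaussian tail bound. One could use the sharper Mills-ratio bound, but the Chernoff form already produces precisely the stated constant $2$ and exponent $\pi/\delta$. If one prefers the one-sided bound, the factor $2$ could even be dropped, so the stated inequality holds with room to spare.
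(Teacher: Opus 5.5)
Your proof is correct and follows the paper's argument. The cell geometry forces $|M_A-M_B|\ge a(|c-d|-1)_+$, and the centred Gaussian tail of $M_A-M_B$, whose variance is $\delta/2$, then gives the stated bound; you make this explicit through the Chernoff estimate. (One small remark: since $\nu_A\ne\nu_B$ in general, swapping $M_A$ and $M_B$ is not a literal symmetry of the law, but this is harmless because your estimate uses only $|M_A-M_B|$, which covers both orderings of $c,d$.)
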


\begin{proof}
If $M_A\in J_c$ and $M_B\in J_d$, then $|M_A-M_B|\ge a(|c-d|-1)_+$ with $a^2=\pi$. Since $M_A-M_B$ is centred Gaussian with variance $\delta/2$, the standard Gaussian tail estimate gives Eq.~\eqref{eq:midpointtail}.
\end{proof}

Combining the length and midpoint bounds yields the microscopic symmetrized defect-kernel estimate
\begin{align}
|H_{(c,\ell),(d,m)}|
\le{}&8e^{-\frac{\pi\gamma}{4}(\ell-m)^2}
\min\left\{1,2e^{-\frac{\pi}{\delta}(|c-d|-1)_+^2}\right\}.
\label{eq:microscopicdefectbound}
\end{align}
Define the material defect graph $G_D(p;t)$ by joining two alternating cycles whenever some signed defects share the same length and have midpoint distance at most one. If two cycles are not adjacent, then
\begin{equation}
\boxed{|b_{ij}|\le \mathcal D_i\mathcal D_j\,\eps_D,}
\qquad
\eps_D=\max\{8e^{-\pi\gamma},16e^{-\pi/\delta}\}.
\label{eq:offgraphbound}
\end{equation}
Let the connected components of $G_D$ have defect masses $\mathcal D(G_a)$. The total omitted cross-component interaction obeys
\begin{equation}
R_D\le \eps_D\sum_{a<b}\mathcal D(G_a)\mathcal D(G_b)
\le 2N^2\eps_D.
\label{eq:weaktailbudget}
\end{equation}

\begin{theorem}[Robust defect-component cohort bound]\label{thm:defectcomponent}
If $C_\Gamma(t)-C_\Gamma(p)>R_D$, then at least one connected component of $G_D(p;t)$ has positive exact correction gain. Consequently,
\begin{equation}
\boxed{K_t(p)\le\chi_D(p;t),}
\label{eq:chidbound}
\end{equation}
where $\chi_D$ is the largest defect-graph component size.
\end{theorem}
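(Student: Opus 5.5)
We follow the proof of the positive-synergy component bound, replacing the sign condition on cross-component pair terms by the quantitative budget $R_D$. By the exact alternating-cycle quadratic identity, Eq.~\eqref{eq:cyclequadratic}, correcting all $m$ relative cycles gives
\[
C_\Gamma(t)-C_\Gamma(p)=\Delta_p(\{1,\dots,m\})=\sum_i a_i+\sum_{i<j}b_{ij},
\]
because correcting every cycle turns $p$ into $t$. Let $G_1,\dots,G_s$ be the connected components of the material defect graph $G_D(p;t)$, and write $\Delta_p(G_a)$ for the gain from correcting exactly the cycles of $G_a$. Grouping the linear terms and the intra-component pair terms by component, and collecting the pair terms between distinct components separately, gives the exact decomposition
\[
C_\Gamma(t)-C_\Gamma(p)=\sum_{a}\Delta_p(G_a)+\sum_{a<b}\sum_{i\in G_a,\,j\in G_b}b_{ij}.
\]

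Next we bound the cross-component sum. Two cycles lying in different components are in particular not adjacent in $G_D$, so Eq.~\eqref{eq:offgraphbound} gives $|b_{ij}|\le\mathcal D_i\mathcal D_j\eps_D$. Summing over $i\in G_a$ and $j\in G_b$ gives at most $\eps_D\,\mathcal D(G_a)\mathcal D(G_b)$, where $\mathcal D(G_a)=\sum_{i\in G_a}\mathcal D_i$. Summing over $a<b$, the absolute value of the cross term is at most $R_D$ as defined in Eq.~\eqref{eq:weaktailbudget}. If every component had $\Delta_p(G_a)\le0$, we would get $C_\Gamma(t)-C_\Gamma(p)\le R_D$, contradicting the hypothesis. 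Hence some component $G_a$ has $\Delta_p(G_a)>0$.

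This component is a set of $|G_a|\le\chi_D(p;t)$ target-relative cycles whose simultaneous correction strictly increases $C_\Gamma$. By the definition of $K_t(p)$ as a minimum, $K_t(p)\le|G_a|\le\chi_D$.

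The main obstacle is justifying the off-graph bound Eq.~\eqref{eq:offgraphbound} from the microscopic estimate Eq.~\eqref{eq:microscopicdefectbound} through Eq.~\eqref{eq:shellprofilebij}. Non-adjacency does not force both factors to be small; it only says that every defect pair either has different lengths or has midpoint distance at least $2$.
\begin{itemize}
\item Different odd lengths differ by at least $2$, so the length factor is at most $8e^{-\pi\gamma}$.
\item Equal lengths with $|c-d|\ge2$ give a midpoint factor at most $2e^{-\pi/\delta}$, hence an overall bound of $16e^{-\pi/\delta}$.
\end{itemize}
Either way each entry is at most $\eps_D$. The bound on $b_{ij}$ then follows from the triangle inequality applied to the bilinear form in Eq.~\eqref{eq:shellprofilebij}, which produces the factor $\mathcal D_i\mathcal D_j$. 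For this step I would check carefully that the normalization of the symmetrized kernel $H$ matches the stated constants. I would also check that the unsymmetrized terms $K$ and $K^{T}$ are both absorbed by the $\Xi$ symmetrization and do not cost an extra factor of $2$.
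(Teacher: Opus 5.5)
Your proof is correct and follows the paper's route: the theorem is the quantitative version of the positive-synergy component bound, with the nonpositivity of the cross-component $b_{ij}$ replaced by the budget $R_D$ from Eqs.~\eqref{eq:offgraphbound}--\eqref{eq:weaktailbudget}, and the paper takes that off-graph bound as already established before the theorem. Your constant worry also resolves cleanly: an entry of $K+K^T$ linking a $(c,\ell)$ defect to a $(d,m)$ defect is $P_{cd}W_{\ell m}+P_{dc}W_{m\ell}$, and each summand is at most $4e^{-\pi\gamma(\ell-m)^2/4}\min\{1,2e^{-\frac{\pi}{\delta}(|c-d|-1)_+^2}\}$ because the midpoint bound depends only on $|c-d|$, so the two summands together give exactly the constant $8$ with no extra factor of $2$.
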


At $(\omega_0,r)=(0.0028,0.1)$, $\eps_D=1.33917\times10^{-153}$ and the uniform $N=10$ weak-tail budget is $2.67833\times10^{-151}$. The smallest target gap among the 84 singleton traps is $2.91404\times10^{-3}$, exceeding that complete weak-tail budget by roughly $1.09\times10^{148}$.

\subsection{Target-shell congestion and its sharp boundary}

The target itself supplies a geometry-only congestion budget. For every equal target length $L$, join two target transpositions when their midpoint indices differ by one. In the centred shell grammar these target bonds form disjoint paths. If the path blocks have cardinalities $k_\beta$, define
\begin{equation}
B_t=\sum_\beta(k_\beta-1),
\qquad
\boxed{s_t=1+B_t.}
\label{eq:stcongestion}
\end{equation}
The target-supported quotient graph on alternating cycles is obtained by identifying target-edge vertices that belong to the same relative cycle.

\begin{theorem}[Target-shell bond congestion]\label{thm:targetcongestion}
Every connected component of the target-supported quotient graph contains at most $s_t$ alternating cycles. The bound is independent of the total window size once the target shell blocks are fixed.
\end{theorem}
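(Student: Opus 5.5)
The argument is a vertex-counting bound on a quotient graph. I will make the \emph{target-supported quotient graph} precise first. Its underlying graph $G_t$ has the target transpositions as vertices. Two vertices are joined by a target bond when they have the same length $L$ and midpoint indices differing by one. By hypothesis, within the centred shell grammar each length sector consists of consecutive midpoints, so the connected components of $G_t$ are exactly the disjoint path blocks with cardinalities $k_\beta$. The quotient identifies every target edge with the relative cycle of $t^{-1}p$ that contains it. Target edges shared by $p$ lie in trivial cycles and will be discarded. Two nontrivial cycles are then adjacent in the quotient precisely when some target bond joins a target edge of one cycle to a target edge of the other.

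Next, fix a connected component $\mathcal K$ of the quotient, containing cycles $\mathcal C_1,\dots,\mathcal C_r$, so the goal is $r\le s_t$. Because the quotient is connected, a spanning tree of $\mathcal K$ has $r-1$ edges. Each quotient edge is witnessed by at least one target bond, and distinct tree edges may be assigned distinct witnessing bonds: a bond joins exactly two target transpositions, hence lies over at most one pair of cycles. This gives an injection from the tree edges into the set of target bonds, so
\[
r-1\le \#\{\text{target bonds}\}=\sum_\beta (k_\beta-1)=B_t,
\]
where the equality holds because a path on $k_\beta$ vertices has $k_\beta-1$ edges. Hence $r\le 1+B_t=s_t$. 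A component with no bonds at all is a single cycle, which is consistent with the bound.

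The step I expect to need the most care is the well-definedness of the identification. Each target transposition must belong to exactly one relative cycle, and the ``$+1$'' must account for cycles that carry target edges yet meet no bond. The alternating-cycle decomposition is disjoint in edges, so every nontrivial target edge lies in a unique cycle. I would also verify that bonds with both endpoints in the same cycle simply become loops; they only loosen the inequality, since such bonds are then unused by the injection. Window-size independence follows because $B_t$ depends only on the fixed target shell blocks. Adjacent tail pairs or singleton rainbow sectors contribute $k_\beta=1$ and add nothing to $B_t$, regardless of how many exterior layers the window contains.
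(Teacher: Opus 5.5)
Your proposal is correct and is essentially the paper's argument. The paper likewise notes that each shell path contributes $k_\beta-1$ bonds, that every edge of the quotient graph is the image of some target bond (so the quotient has at most $B_t$ distinct edges), and that a connected graph on $m$ cycles needs at least $m-1$ edges, giving $m\le 1+B_t=s_t$; your spanning-tree injection and your explicit handling of loops and trivial cycles are a more careful packaging of that same edge count.
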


\begin{proof}
Each target shell path contributes exactly $k_\beta-1$ nearest-neighbour bonds. Quotienting by relative-cycle ownership cannot create a graph edge that is not the image of one of those bonds. Thus the target-supported cycle graph has at most $B_t$ distinct edges. A connected graph on $m$ vertices requires at least $m-1$ edges, giving $m\le1+B_t=s_t$.
\end{proof}

For the two-shell $M=20$ target at $\omega_0=0.0028$, the repeated shells are $L=3$ and $L=9$, each with midpoints $-1,0,1$. Hence $B_t=4$ and
\begin{equation}
\boxed{s_t=5.}
\end{equation}
An exhaustive audit of all $10!$ matchings finds target-supported quotient components of sizes zero through five, with 216 matchings attaining size five. Thus the geometry-only bound is sharp, although the optimization-relevant trapped states are much less congested.

To account for material couplings generated by current edges, let $E_x(p;t)$ be the number of material cycle-graph edges not already supported by a target-shell bond. Since the material graph has at most $B_t+E_x$ edges, Theorem~\ref{thm:defectcomponent} gives the finite-band sufficient bound
\begin{equation}
\boxed{C_\Gamma(t)-C_\Gamma(p)>R_D
\quad\Longrightarrow\quad
K_t(p)\le s_t+E_x(p;t).}
\label{eq:targetcongestionbound}
\end{equation}
At $\omega_0=0.0028$, all 84 singleton-trapped states have $E_x=0$. At $0.0021$, 114 of 116 traps have $E_x=0$ and two have $E_x=1$. This establishes the precise boundary of a target-only congestion theorem: target geometry controls the shell-local material graph, but arbitrary current matchings can add local defect channels.

\begin{proposition}[Finite-$N$ extrinsic-defect removal certificate]\label{prop:extrinsicremoval}
For the complete $M=20$, $r=0.1$, $\omega_0=0.0021$ audit, each of the two singleton-trapped matchings with $E_x=1$ admits a strictly improving correction of two target-relative alternating cycles that reduces $E_x$ to zero. The two certified gains are
\begin{equation}
\begin{aligned}
\Delta_1&=1.74274478713\times10^{-2},\\
\Delta_2&=3.22278165607\times10^{-3}.
\end{aligned}
\label{eq:extrinsicgains}
\end{equation}
Thus extrinsic congestion is removable by a monotone cohort move in every audited two-shell trap, although no arbitrary-$N$ removal theorem is claimed.
\end{proposition}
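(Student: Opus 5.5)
This is a finite computational certificate. The plan is to exhibit, for each of the two trapped matchings, an explicit pair of target-relative alternating cycles. Each pair is then checked with the exact quadratic identity of Eq.~\eqref{eq:cyclequadratic}.

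\emph{Step 1: isolate the trapped states.} Working at $r=0.1$, $\omega_0=0.0021$, with the completed adjacent exterior of Eq.~\eqref{eq:exteriorindexing}, I would first recompute the target $t=(3^3)|(9^3)|13|15|17|19$. From the exhaustive $10!$ audit I would then extract the $116$ states with $K_t>1$. For each, I would compute the relative permutation $t^{-1}p$, its cycles $\mathcal C_i$, the defect vectors $d_i$, and $E_x(p;t)$. This isolates the two matchings $p_1,p_2$ with $E_x=1$. For each $p_k$, I would identify the unique material cycle-graph edge not supported by a target-shell bond, together with the two cycles $i_k,j_k$ it joins.

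\emph{Step 2: the candidate move and its gain.} The natural candidate is to correct exactly $J=\{i_k,j_k\}$. By construction this removes the only extrinsic channel, so $E_x(E_Jp_k;t)=0$. This needs checking, because correcting cycles changes current edges. I would verify it by recomputing the material graph of $E_Jp_k$ directly. The gain then equals $a_{i_k}+a_{j_k}+b_{i_kj_k}$, with $b$ taken from Eq.~\eqref{eq:bijbilinear} or Eq.~\eqref{eq:shellprofilebij}. As an independent check, I would evaluate $C_\Gamma(E_Jp_k)-C_\Gamma(p_k)$ directly from Eq.~\eqref{eq:exactobjective}. Since the states are singleton-trapped, $a_{i_k},a_{j_k}\le0$, so the improvement must come from a positive synergy $b_{i_kj_k}$. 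This is consistent with the positive-synergy bound of Eq.~\eqref{eq:positivegraphbound}.

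\emph{Step 3: the main obstacle, certifying the sign.} The quantities in Eq.~\eqref{eq:extrinsicgains} are of order $10^{-2}$ and $10^{-3}$. The errors fall into two kinds.
\begin{itemize}
\item \emph{Exterior truncation.} It is tail-controlled and far smaller than these gains.
\item \emph{Floating-point evaluation of $P_{cd}$ as rectangle probabilities.} I would use interval or high-precision bivariate-normal evaluation. The off-length terms are bounded by Theorem~\ref{thm:offlength}, giving $e^{-\pi\gamma}$ of astronomically small size at this bandwidth. Midpoint-separated terms are bounded by Eq.~\eqref{eq:midpointtail}.
\end{itemize}
This bounds the total numerical error well below $\min(\Delta_1,\Delta_2)$. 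It therefore certifies that both moves are strictly improving and monotone. The result is a finite-$N$ statement only, so no uniformity argument is attempted.
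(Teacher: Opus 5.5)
Your plan is correct and takes the paper's route: the paper gives no analytic proof of this proposition, only the exhaustive $M=20$ audit and release gate (11), an exact replay checking both $E_x:1\to0$ reductions and their positive gains, which your Steps 1--3 reproduce. Two small refinements: first, the paper does not say which pair of cycles is corrected, so check every material-graph edge incident to $i_k$ or $j_k$ (not only $\{i_k,j_k\}$ itself) in case the extrinsic pair is not the improving one; second, the $E_x$ check is automatic, because correcting $J$ leaves the remaining cycles and their defect vectors unchanged, so $G_D(E_Jp;t)$ is the induced subgraph on the uncorrected cycles and any $J$ meeting $\{i_k,j_k\}$ already gives $E_x=0$.
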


\section{Bell consequences and comparison with existing Bose routes}

\subsection{CHSH assembly}

Let
\begin{equation}
X=\cos A,
\qquad
Y=\cos C
\end{equation}
be the ordinary single-cosine axes in the two wedges, and let $\Gamma_A,\Gamma_B$ be matched exact Clifford involutions. Choose Bob's final settings
\begin{equation}
B_0=\frac{Y+\Gamma_B}{\sqrt2},
\qquad
B_1=\frac{Y-\Gamma_B}{\sqrt2}.
\end{equation}
Because $\{Y,\Gamma_B\}=0$ and $Y^2\le I$, both are contractions. 
Here the Bell inequality is used in its self-adjoint contraction form. The local settings are
\begin{equation}
A_0=X,\qquad A_1=\Gamma_A,\qquad B_0,\qquad B_1,
\end{equation}
all with operator norm at most one. Only $\Gamma_A$ and $\Gamma_B$ are exact involutions with spectrum contained in $\{+1,-1\}$. The cosine axes and the normalized Bob combinations are generally nonprojective contractions. This distinction does not alter the CHSH local bound $2$: a self-adjoint contraction $S$ defines a genuine two-outcome POVM through $E_\pm=(I\pm S)/2$, with outcomes $\pm1$ and first moment $S$. We therefore reserve the terms \emph{involution} or \emph{projective dichotomic observable} for operators squaring to $I$, and use \emph{Bell setting} or \emph{self-adjoint contraction} for the wider CHSH class.

The CHSH operator reduces to
\begin{equation}
\CHSH=\sqrt2(XY+\Gamma_A\Gamma_B).
\end{equation}
The exact finite-band cosine correlation is
\begin{equation}
C_X=\frac12\left[
 e^{-\frac\pi4(\nu_A+\nu_B-2\kappa)}
+e^{-\frac\pi4(\nu_A+\nu_B+2\kappa)}
\right],
\label{eq:CX}
\end{equation}
so
\begin{equation}
B_\sigma=\sqrt2(C_X+C_\Gamma(\sigma)).
\label{eq:Bell}
\end{equation}

\noindent\textbf{Explicit CHSH specification and status of the two reflection-enhanced values.}
For clarity, the complete Bell construction considered here uses the four self-adjoint contractions
\[
A_0=X=\cos A,\qquad A_1=\Gamma_A,
\]
\[
B_0=\frac{Y+\Gamma_B}{\sqrt2},\qquad
B_1=\frac{Y-\Gamma_B}{\sqrt2},\qquad Y=\cos C,
\]
where $\Gamma_A$ and $\Gamma_B$ are the wedge-local lifts of matched measurable Clifford involutions in the two complementary wedge CCR planes. The corresponding CHSH expectation is
\[
B=\sqrt2\bigl(C_X+C_\Gamma\bigr).
\]
Two distinct reflection-enhanced results must therefore be distinguished. First, for the global affine reflection
\[
R_0(q)=a-q,
\]
the exact infinite-line correlation is
\[
C_R=\frac{1}{D}\exp\!\left(-\frac{\pi\delta}{4D}\right),
\qquad D=\nu_A\nu_B-\kappa^2,
\]
and at $(r,\omega_0)=(0.1,0.01)$ this gives
\begin{align*}
C_R&=0.96917057543459\ldots,\\
B_R&=2.0603750381652\ldots .
\end{align*}
This value follows from the closed analytic reflection formula and involves no finite-window optimization.

Second, allowing both translation and reflection blocks in the declared 12-cell cellwise grammar gives, by exhaustive enumeration of all $140{,}152$ admissible involutions,
\[
C_{\Gamma,12}^{\mathrm{mix}}=0.96926709992377\ldots,
\]
and hence
\[
B_{12}^{\mathrm{mix}}=2.0605115444068\ldots .
\]
The latter number is the exact optimum of that finite 12-cell mixed grammar with the stated exterior convention. It is not claimed to be the global optimum over all measurable translation/reflection fibre involutions.

At $r=0.1$, $\omega_0=0.01$, the adjacent-cell matching gives a Bell value just below $2$, whereas the exact 20-cell shell optimizer gives
\begin{equation}
\boxed{B_{20}=2.02505064\ldots>2.}
\end{equation}
The finite-band Bell threshold for the optimized first-shell phase is
\begin{equation}
\boxed{\omega_{0,\mathrm{shell}}=0.0124323439646\ldots,}
\end{equation}
while the adjacent-cell threshold is
\begin{equation}
\boxed{\omega_{0,\mathrm{adj}}=0.00901789882954\ldots.}
\end{equation}
Therefore the optimized matching increases the admissible bandwidth by
\begin{equation}
\boxed{37.9\%}
\end{equation}
relative to the adjacent construction at this packet ratio.

\subsection{Positive-kernel infinite-tail certificate}\label{subsec:tailcert}

The quoted translation-cell violation can be certified independently of how the infinite exterior is truncated. Every term in Eq.~\eqref{eq:exactobjective} is nonnegative because $P_{cc'}\ge0$ and $W_{\ell m}\ge0$. Therefore any finite subset of an infinite matching gives a rigorous monotone lower bound on its full correlation.

At $(r,\omega_0)=(0.1,0.01)$ retain the ten optimized central transpositions from Eqs.~\eqref{eq:refcore-shell}--\eqref{eq:refcore-rainbows} together with only the six adjacent exterior blocks of length one at $c=\pm11,\pm13,\pm15$. An independent $70$-digit conditional-Gaussian quadrature gives
\begin{align}
C_X&=0.48773458583944921048\ldots,\\
C_{\Gamma,15}&=0.94419245361049243679\ldots,\\
B_{15}&=2.02505063951886138635\ldots .
\end{align}
Hence, before any omitted tail is restored,
\begin{equation}
\boxed{B_\infty\ge B_{15}>2.0250506395>2.}
\label{eq:taillower}
\end{equation}

The omitted contribution is also explicitly bounded. The first omitted adjacent midpoint cell begins at $|q|=16.5a$. Let $z_A=16.5a/\sqrt{Q_{AA}}$ and $z_B=16.5a/\sqrt{Q_{BB}}$. Here
\begin{equation}
z_A=7.32916783196\ldots,\qquad z_B=7.32993487498\ldots .
\end{equation}
Using the two-sided Mills bound
\begin{equation}
2[1-\Phi(z)]\le\sqrt{\frac{2}{\pi}}\frac{e^{-z^2/2}}{z},
\end{equation}
gives marginal tail bounds $m_A<2.358\times10^{-13}$ and $m_B<2.345\times10^{-13}$. The omitted exterior has length one, while every central length differs from one; Theorem~\ref{thm:offlength} therefore yields
\begin{align}
0\le C_{\Gamma,\infty}-C_{\Gamma,15}
&\le\left(W_{11}+10\,\varepsilon\right)(m_A+m_B),\\
\varepsilon&=4e^{-\pi\gamma}.
\end{align}
Numerically,
\begin{equation}
\boxed{0\le B_\infty-B_{15}<1.30\times10^{-12}.}
\label{eq:tailupper}
\end{equation}
A separate double-precision conditional-normal implementation reproduces $B_{15}$ within $1.2\times10^{-13}$ of the high-precision route. Equations~\eqref{eq:taillower}--\eqref{eq:tailupper} isolate the truncation issue: the tail can only increase the Bell value, and its entire possible effect is thirteen orders of magnitude smaller than the observed excess above $2$.

\subsection{Relation to continuous-variable pseudospin}

The broader pseudospin idea is established prior art. Chen and collaborators introduced continuous-variable operators obeying Pauli-like algebra and used them to obtain strong Bell violations in two-mode squeezed states \cite{ChenPanHouZhang2002,ChenZhang2002}. Larsson derived Bell inequalities from interval/binary structure in position measurements \cite{Larsson2004}. Phase-space parity methods provide another established dichotomic route \cite{BanaszekWodkiewicz1999}. None of those ingredients should be re-labelled as new.

The present contribution is instead the \emph{modular matching problem} generated by the Weyl half-period. The modular vectors fix $\nu_A,\nu_B,\kappa$ exactly at finite bandwidth; the cosine half-period converts Cliffordization into opposite-fibre matching; and the resulting QFT correlation becomes the Gaussian QAP in Eq.~\eqref{eq:exactobjective}. The shell transitions are therefore not generic properties of every continuous-variable pseudospin state. They are consequences of this modular finite-band covariance geometry.

\subsection{Relation to sign, finite Weyl, and cat-state routes}

A bounded function of a single smeared field in a centered quasifree state participates in one common positive Gaussian representation across the four CHSH contexts, so a four-one-quadrature construction stays at or below $2$ \cite{CaribeFiniteWeyl2026}. Finite-Weyl polynomials evade that restriction by combining noncommuting local axes. The present Clifford construction also escapes the one-quadrature family: $\Gamma_\sigma$ acts by permuting spectral fibres rather than multiplying the wavefunction by a scalar function of $q$.

Critical coherent cats provide yet another resource. In the modular finite-Weyl architecture they can preserve off-diagonal coherent terms at a displacement scale $\|h\|\asymp\omega_0^{-1/2}$ and produce a four-Weyl-per-axis violation in the smallest phase-preserving resonance \cite{NewtonCat2026}. That mechanism is state engineering. The fibre-matching construction developed here is instead an exact observable-algebra construction in the Gaussian vacuum. Combining the two is possible but not required for the present Bell violation, and no claim is made here that such a hybrid would improve the proven finite-band optimum.

\section{Verification, limitations, and research directions}

\subsection{Independent computational verification}

All headline numbers were regenerated from the exact formulas in this paper. The reference implementation uses no Monte Carlo estimator. Bivariate Gaussian cell probabilities are obtained from rectangle differences of the exact two-dimensional normal cumulative distribution function, and the matching objective is assembled from Eq.~\eqref{eq:exactobjective}. For $M=20$ the exhaustive search evaluates all
\begin{equation}
10!=3{,}628{,}800
\end{equation}
opposite-parity perfect matchings. A compiled numerical kernel is used only to accelerate the enumeration; every candidate is scored by the same precomputed exact Gaussian blocks.

The original automated suite contains eleven release gates:
\begin{enumerate}
\item reproduction of the degree-511 modular benchmark inherited from the finite-band cat-state analysis;
\item independent root finding for the first shell transition;
\item independent root finding for the second shell transition;
\item randomized and enumerated checks of the off-length bound;
\item exhaustive reproduction of the 20-cell optimum at $r=0.1$, $\omega_0=0.01$;
\item shell-DP versus unrestricted exhaustive matching checks on smaller windows and multiple phase points;
\item independent root finding for the optimized and adjacent Bell-bandwidth thresholds;
\item an odd-cardinality midpoint-compression replay over all three-cell subsets of a seven-cell Gaussian window;
\item an independent numerical verification of the exact cycle-synergy bilinear identity in Eq.~\eqref{eq:bijbilinear};
\item independent root finding for the reachability-order transition in Eq.~\eqref{eq:reachabilityroot};
\item exact replay of Proposition~\ref{prop:extrinsicremoval}, checking both $E_x:1\to0$ reductions and their positive gains.
\end{enumerate}
All eleven original tests pass in the accompanying release. The referee-response attack adds four independent gates: (i) high-precision versus double-precision reproduction of the positive-tail lower bound together with the analytic Mills upper budget; (ii) direct evaluation of Eq.~\eqref{eq:globalreflection} against an independent reflection-block sum, agreeing to $4.4\times10^{-14}$; (iii) exhaustive enumeration of all $140{,}152$ mixed $12$-cell involutions at the reference point; and (iv) independent repetition of the mixed scan over $r=0.05,0.1,0.2,0.5,0.8$. All four added gates pass. Two further reproducibility gates were then added after the independent audit: (v) the exact window/exterior decomposition in Eq.~\eqref{eq:windowdecomp}, reproducing the $M=4,8,12$ rows from core and exterior pieces separately; and (vi) a fresh $6!=720$ translation replay at $\omega_0=2.40\times10^{-4},2.25\times10^{-4},8.0\times10^{-5}$, recovering respectively the centred shell, shifted-shell doublet, and $L=3$ crystal in Proposition~\ref{prop:shellcounterexample}. Both additional gates pass.

A second verification pass was performed after the manuscript was complete. It specifically rechecked signs in $P$ and $Q$, the factor of $\pi/4$ in Eq.~\eqref{eq:Wlm}, the constrained-quadratic derivation of $\gamma$, the odd-length mismatch minimum $|\ell-m|=2$, the shell cell-set identity, the BLL midpoint-compression statement, all displayed phase pairings, the exchange roots, Bell thresholds, the exact quadratic cycle identity, the defect-tail constants, the target-shell bond count, and the two extrinsic-defect removal gains. This pass caught and rejected three stronger statements: universal shell equalization, a bare ``two shells imply $K_t\le2$'' rule, and a nontrivial bound on $\chi_D$ from target geometry alone for arbitrary matchings. It also corrected the final two arcs in the $\omega_0=0.0014$ schematic so that they agree with the certified optimum.

\subsection{Code and numerical reproducibility}

A clean public numerical reproducer for the results reported in this manuscript is available at
\begin{center}
\url{https://github.com/tonynewton79-web/modular-bose-clifford-reproducibility}.
\end{center}
A fixed archival snapshot of the manuscript reproducibility release is deposited on Zenodo and may be cited by the persistent DOI
\begin{center}
\href{https://doi.org/10.5281/zenodo.22257497}{\texttt{10.5281/zenodo.22257497}}.
\end{center}
The GitHub repository contains the manuscript-level numerical implementation and concise instructions for reproducing the central reflection values, the shell and Bell-threshold root calculations, the exhaustive $10!=3{,}628{,}800$ translation matching search, and the exhaustive $140{,}152$-involution reflection-inclusive $12$-cell search. The Zenodo record preserves a stable snapshot of that public reproducibility release. Both public records are deliberately restricted to numerical reproducibility of the present paper; private theorem-search and research-development machinery is not part of the release.

\subsection{Limitations}

Five limitations define the revised boundary.

First, Theorem~\ref{thm:wedgelocal} closes local-algebra membership for the free scalar CCR plane used in this paper. It is not a theorem that an arbitrary interacting QFT local algebra contains an identical canonical type-I subfactor with the same fibre interpretation.

Second, global centred shell compression is false in general: Proposition~\ref{prop:shellcounterexample} gives a $12$-cell translation-only counterexample. The shell dynamic program remains exact within its grammar and reproduces the broad-band phases reported here, but an arbitrary-window classification must admit shifted equal-length runs and periodic translation blocks.

Third, reflections are no longer merely a formal branch: Eq.~\eqref{eq:globalreflection} solves the global affine-reflection family exactly and Eq.~\eqref{eq:mixedkernel} supports a complete mixed $12$-cell audit. What remains open is a global optimization theorem for arbitrary measurable mixtures of translation and reflection branches. Accordingly, ``optimized'' always carries an explicit grammar/window qualifier in the revised text.

Fourth, the target-shell congestion number $s_t$ remains a sufficient geometry-only bound only for target-supported material couplings. Arbitrary current matchings can create additional local defect edges, represented by $E_x(p;t)$ in Eq.~\eqref{eq:targetcongestionbound}. The finite-$N$ removal certificate does not yet extend to arbitrary windows.

Fifth, algebra membership is not the same as a hardware implementation. Local QFT measurement theory supplies principled probe frameworks \cite{FewsterVerch2020,BostelmannFewsterRuep2021,FewsterJubbRuep2023}, but an explicit detector/control compilation for the highly structured fibre involutions remains a separate operational problem.

\subsection{Next mathematical targets}

The highest-priority structural target is now a two-level classification. On the translation side, Proposition~\ref{prop:shellcounterexample} shows that the correct grammar must include shifted equal-length runs and periodic block/crystal matchings rather than only centred shells. The natural finite-$N$ attack is therefore a conflict/run graph whose vertices are admissible same-length components and whose hard constraints enforce cell disjointness; the objective retains the exact Gaussian self-score plus exponentially small inter-length corrections. On the mixed side, Eq.~\eqref{eq:mixedkernel} removes the kernel obstruction and Theorem~\ref{thm:globalreflection} solves the global affine-reflection branch, but a proof is still needed that compares reflection blocks against the enlarged translation geometry as the window grows.

A second target is to characterize both kinds of finite-band phase boundary. Matching transitions are zeros of finite combinations of Gaussian rectangle probabilities and $W_{LL}$ factors, while reachability-order transitions such as Eq.~\eqref{eq:reachabilityroot} are zeros of cycle-cohort gains with the target held fixed. Distinguishing these two phase diagrams appears necessary for a complete optimization theorem.

A third target is the large-window translation-block scaling limit. The shell estimate in Eq.~\eqref{eq:shellscale} remains a useful broad-band length scale, but the $12$-cell counterexample shows that it is not the terminal geometry at very small bandwidth. The relevant continuum programme is therefore to derive a variational limit for the number, length, displacement and boundary cost of repeated equal-length runs, with the exponentially small unequal-length coupling treated as a controlled correction.

\section{Conclusion}

The central question was not only whether a spectral-fibre permutation can anticommute exactly with a Bose cosine axis, but whether that permutation is genuinely local in the algebraic-QFT sense. The revised answer is affirmative for the free scalar CCR plane. The local canonical pair generates a type-I subfactor of the wedge algebra, every measurable measure-preserving fibre involution lifts to a bounded wedge-local operator, and the translation-cell subclass admits the explicit strong sum of local spectral projections and local Weyl translations in Eq.~\eqref{eq:stronglocalgamma}. The Alice and Bob fibre observables therefore belong to commuting wedge algebras rather than merely producing a Hilbert-space correlation.

Within the translation-cell grammar, the original structural results survive unchanged. The exact correlation is a positive Gaussian quadratic functional, unequal odd translation lengths are exponentially separated, shell equalization occurs through finite-band phase transitions rather than a universal monotone law, and Gaussian rearrangement drives equal-length cohorts toward consecutive central midpoint cells. Exhaustive optimization of $3{,}628{,}800$ matchings establishes the $20$-cell phase data. The positive-kernel property now also closes the numerical-tail question: a finite retained subset already gives $B>2.0250506395$, while the entire omitted exterior changes the CHSH value by less than $1.30\times10^{-12}$.

A further adversarial translation-only attack changes the old optimization frontier. Global centred shell compression is not merely unproved: it is false. In the $12$-cell window, complete $6!$ enumeration produces a centred-shell phase, then a symmetry-broken shifted-shell doublet, and finally a two-block $L=3$ translation crystal. This does not alter any of the broad-band $M=20$ Bell values, but it means the shell dynamic program must be described as an exact within-grammar optimizer rather than the conjectured universal translation geometry.

The most important new attack is that translation cells are not the whole fibre architecture. The affine reflection $R(q)=a-q$ is itself an exact wedge-local Clifford involution and yields the closed-form correlation in Eq.~\eqref{eq:globalreflection}. At $(r,\omega_0)=(0.1,0.01)$ it raises the cosine-axis Bell value from the translation reference $2.02505\ldots$ to $2.060375\ldots$ without any window truncation. A complete mixed-branch $12$-cell enumeration raises this slightly further to $2.0605115444\ldots$. Thus the small original violation is neither an intrinsic ceiling nor a consequence required by exact fibre Cliffordization.

The claim boundary is correspondingly sharper. The paper establishes wedge-locality for the free-field CCR construction, a certified infinite-tail translation violation, an exact global-reflection Bell branch, and a finite exhaustive mixed-branch improvement. It does not yet classify the globally optimal measurable mixture of translations and reflections, nor does it provide a detector-specific physical realization. Those are now cleanly separated mathematical and operational targets rather than hidden assumptions in the Bell conclusion.

\appendix
\section{Derivation of the constrained mismatch coefficient}

Let
\begin{equation}
M_-=
\begin{pmatrix}
\nu_A&-\kappa\\-\kappa&\nu_B
\end{pmatrix},
\qquad
D=\det M_-.
\end{equation}
For $a=(1,-1)^T$, minimizing $x^TM_-x$ subject to $a^Tx=1$ gives
\begin{equation}
\min x^TM_-x=\frac{1}{a^TM_-^{-1}a}.
\end{equation}
Since
\begin{equation}
M_-^{-1}=\frac1D
\begin{pmatrix}
\nu_B&\kappa\\\kappa&\nu_A
\end{pmatrix},
\end{equation}
we obtain
\begin{equation}
a^TM_-^{-1}a=\frac{\nu_A+\nu_B-2\kappa}{D}
=\frac\delta D,
\end{equation}
so the minimum is $D/\delta=\gamma$. Repeating the calculation for
$M_+=(\begin{smallmatrix}\nu_A&\kappa\\\kappa&\nu_B\end{smallmatrix})$ and $a=(1,1)^T$ gives the same coefficient. This proves the two quadratic mismatch inequalities used in Theorem~\ref{thm:offlength}.

\section{Shell block geometry}

A rainbow pair at radial index $j$ is
\begin{equation}
R_j=(-j-1,j),
\end{equation}
with length $2j+1$ and midpoint zero. Take the consecutive block $j=s,\ldots,s+2h$. Its cell set is
\begin{equation}
\{-s-2h-1,\ldots,-s-1\}
\cup
\{s,\ldots,s+2h\}.
\end{equation}
Let $L=2(s+h)+1$. The shell pair with midpoint $c$ is
\begin{equation}
S_c=
\left(c-\frac{L+1}{2},
      c+\frac{L-1}{2}\right),
\qquad c=-h,\ldots,h.
\end{equation}
The negative endpoints of $S_c$ run through the first interval above and the positive endpoints run through the second. Hence shellization is a cell-preserving re-pairing, not a change in the underlying finite window.

\bibliographystyle{unsrtnat}
\bibliography{references}

\clearpage
\section{Reference numerical ledger}

\begin{center}
\centering
\captionof{table}{Reference values used by the test suite.}
\small
\begin{tabularx}{\textwidth}{@{}>{\raggedright\arraybackslash}p{0.34\textwidth}>{\raggedright\arraybackslash}p{0.27\textwidth}X@{}}
\toprule
Quantity & Value & Status\\
\midrule
First isolated shell root, $r=0.1$ & $0.0135070266362$ & independently root-solved\\
Second isolated shell root, $r=0.1$ & $0.0028239841695$ & independently root-solved\\
Reachability-order root, $r=0.1$ & $0.00214559000369$ & independently root-solved\\
Target-shell congestion $s_t$ at $\omega_0=0.0028$ & $5$ & exhaustive quotient-graph sharpness\\
$M=20$ optimum $C_\Gamma$ at $(\omega_0,r)=(0.01,0.1)$ & $0.9441924536108$ & exhaustive $10!$ search\\
Translation infinite-tail lower bound & $B_\infty>2.0250506395$ & positive partial sum; independent high-precision replay\\
Translation omitted-tail CHSH budget & $<1.30\times10^{-12}$ & analytic Mills/off-length bound\\
Global reflection $C_R$ at $(0.01,0.1)$ & $0.9691705754346$ & exact closed form; block-sum replay\\
Global reflection Bell value & $2.0603750381652$ & exact cosine-axis assembly\\
Global reflection threshold, $r=0.1$ & $0.02203669210198$ & independent root solve\\
Mixed $12$-cell optimum $C_\Gamma$ & $0.9692670999238$ & exhaustive $140{,}152$ involutions\\
Mixed $12$-cell cosine-axis Bell value & $2.0605115444068$ & independent exhaustive replay\\
Optimized Bell threshold, $M=20$, $r=0.1$ & $0.0124323439646$ & exact Gaussian root solve\\
Adjacent Bell threshold, $M=20$, $r=0.1$ & $0.00901789882954$ & exact Gaussian root solve\\
Bandwidth extension & $37.9\%$ & derived from the two thresholds\\
Extrinsic-defect removal gains & $1.74274478713\times10^{-2}$; $3.22278165607\times10^{-3}$ & exact two-trap replay\\
Original release suite & $11/11$ passed & independent execution\\
Referee-response attack suite & $4/4$ passed & tail/locality/reflection/mixed-grammar checks\\
\bottomrule
\end{tabularx}
\end{center}

\vspace{0.7em}
\noindent\textbf{Computational Verification Statement.}
Computational algorithms, proof-search procedures, and certificate criteria were developed by the authors. Large Language Models (LLMs), including ChatGPT and Claude, served as computational assistants and were used for debugging and cross-checking. Mathematical claims were accepted strictly on the basis of independently reproducible exact, symbolic, rational, interval, exhaustive, or controlled high-precision certificates, and were independently verified.

\end{document}